%% file: main.tex
\documentclass[10pt,twocolumn,letterpaper]{article}

\usepackage{cvpr}              

\usepackage{multirow}
\usepackage{pifont}
\usepackage{multirow}
\usepackage{amssymb}
\usepackage{threeparttable}
\usepackage{booktabs}   
\usepackage{makecell}   
\usepackage{multirow}   
\usepackage{lipsum} 
\usepackage{booktabs}
\usepackage{xspace}
\usepackage{booktabs} 
\usepackage{graphicx} 
\usepackage{amsthm}
\newtheorem{theorem}{Theorem}
\newtheorem{lemma}{Lemma}
\newtheorem{corollary}{Corollary}
\input{preamble}
\definecolor{cvprblue}{rgb}{0.21,0.49,0.74}
\usepackage[pagebackref,breaklinks,colorlinks,allcolors=cvprblue]{hyperref}

\def\paperID{4468} 
\def\confName{CVPR}
\def\confYear{2026}

\title{Exposing Functional Fusion: A New Class of Strategic Backdoor in Dynamic Prompt Architectures}

\author{
    Zeyao Liu$^{1,2,3}$ \qquad Zhendong Zhao$^{1,2,3,\dagger}$ \qquad Xiaojun Chen$^{1,2,3}$ \qquad Xin Zhao$^{1,2,3}$ \\
    Yuexin Xuan$^{4}$ \qquad Xiaoshuang Ji$^{1,2,3}$ \\[2pt]
    $^{1}$Institute of Information Engineering, Chinese Academy of Sciences \\
    $^{2}$State Key Laboratory of Cyberspace Security Defense \\
    $^{3}$School of Cyber Security, University of Chinese Academy of Sciences \\
    $^{4}$PetroChina (Beijing) Digital Intelligent Research Institute Co., Ltd. \\
    {\tt\small \{liuzeyao, zhaozhendong, chenxiaojun, zhaoxin, jixiaoshuang\}@iie.ac.cn} \\
}

\makeatletter
\def\blfootnote{\xdef\@thefnmark{}\@footnotetext}
\makeatother

\begin{document}
\maketitle
\blfootnote{$^\dagger$Corresponding author.}
\input{sec/0_abstract}    
\input{sec/1_intro}
\input{sec/2_Related_work}
\input{sec/3_Analysis}
\input{sec/4_preliminary}

\input{sec/5_method}
\input{sec/6_function_fusion}
\input{sec/7_evaluation}

\input{sec/8_conclusion}
\input{sec/acknowledgement}
{
    \small
    \bibliographystyle{ieeenat_fullname}
    \bibliography{main}
}

\input{sec/X_suppl}

\end{document}

%% file: sec/0_abstract.tex
\begin{abstract}
Existing ViT backdoor attacks based on backbone-overwriting full-tuning are computationally expensive and inflict performance degradation.
This has forced adversaries towards the Visual Parameter-Efficient Fine-Tuning (PEFT) paradigm, dominated by adapter-based (e.g., LoRA) and prompt-based (e.g., VPT) approaches.
While adapter security has seen initial study, the risks of the burgeoning prompt-based ecosystem remain critically unexplored.
We fill this critical gap, exposing how the evolution of VPT towards dynamic and context-aware architectures can facilitate a far more dangerous and emergent threat.
This vulnerability arises even though these dynamic modules unlock superior benign performance.
We propose VIPER, an attack framework built on a lightweight, dynamic Visual Prompt Generator (VPG) that demonstrates this vulnerability.
Critically, this dynamic architecture enables Functional Fusion: an emergent phenomenon where malicious logic and benign task utility are tightly fused into the same sparse, high-magnitude parameter core. 
This fusion creates a formidable ``hostage" dilemma, as pruning the attack necessarily destroys the benign performance.
Comprehensive evaluations show VIPER effectively addresses the attacker's trilemma: VIPER not only achieves state-of-the-art performance on clean data, but also maintains near-100\% ASR even under 90\% VPG-module pruning (where LoRA attacks collapse), while adding only an imperceptible 0.06ms (1.16\%) of inference latency. 
VIPER's results, driven by Functional Fusion, expose a new, paradigm-level risk in dynamic prompt architectures.

\end{abstract}

%% file: sec/1_intro.tex
\section{Introduction}
\label{sec:intro}

Vision Transformers (ViTs) \cite{dosovitskiy2020image,touvron2021training,mehta2021mobilevit,yuan2021incorporating} have emerged as a dominant architecture in modern computer vision, outperforming traditional Convolutional Neural Networks (CNNs) across a wide range of tasks. 
Their widespread adoption has established a prevailing paradigm in which powerful pre-trained ViT models are fine-tuned for diverse downstream applications. 
However, their deployment in safety-critical domains also increases their susceptibility to backdoor attacks \cite{gu2017badnets,liu2018trojaning,li2021invisible,bai2024badclip,zhao2022defeat}, where adversaries implant hidden triggers that manipulate model predictions while preserving performance on benign inputs. 
Fundamentally, the architectural distinctions between ViTs (e.g., global self-attention and token-based representations) and CNNs (e.g., locality and hierarchical inductive biases) undermine the effectiveness of conventional attack techniques designed for convolutional models. 
Therefore, it is imperative to develop architecture-aligned backdoor attacks that account for the intrinsic characteristics of ViTs, enabling more accurate threat assessment and informing the design of future defense mechanisms.

Existing backdoor attacks on ViTs largely adopted a retraining \cite{yuan2023you} or backbone-overwriting full-tuning \cite{zheng2023trojvit,wang2025attention} paradigm on the ViT backbone, which is fraught with fundamental flaws. 
First, this full-tuning process is computationally expensive and destructive, especially on fine-grained benign data. 
The large gradients induced by the backdoor objective indiscriminately overwrite the nuanced representational structures learned for benign tasks, causing substantial degradation in clean-sample performance (e.g., on UCF101 \cite{soomro2012dataset}). 
Moreover, the design of methods like BadViT \cite{yuan2023you} and TrojViT \cite{zheng2023trojvit} relies on hijacking the self-attention mechanism, creating conspicuous attention artifacts that are easily detectable by simple defenses like high-attention masking \cite{subramanya2024closer}.
While subsequent work like AIBA \cite{wang2025attention} attempted to fix the design, it cannot solve the fundamental flaw of the process (irreversible parameter destruction). 
These failures motivate adversaries to search for backbone-preserving implantation approaches.

Visual Parameter-Efficient Fine-Tuning (PEFT) presents a promising direction for implanting backdoors without compromising the original model. 
This paradigm is dominated by two branches: 
(1) Adapter-based methods \cite{bafghi2024parameter,hu2022lora, ulku2024lora,yuan2025fulllora}, which modify the model's weight space (e.g., LoRA), and (2) Prompt-based methods like Visual Prompt Tuning (VPT) \cite{jia2022visual,shang2025pro,ren2025vpt,he2025dvpt,han20232vpt,zha2023instance}, which inject learnable tokens into the feature space. 
While adapters are broadly popular, VPT has proven to be a highly competitive, sometimes superior, paradigm for vision-specific tasks, particularly in high-stakes domains like medical image segmentation \cite{shen2023med}. 
It is this burgeoning and powerful prompt-based ecosystem whose security remains critically under-explored.

\begin{figure}[tb]
\centering\includegraphics[width=\linewidth]{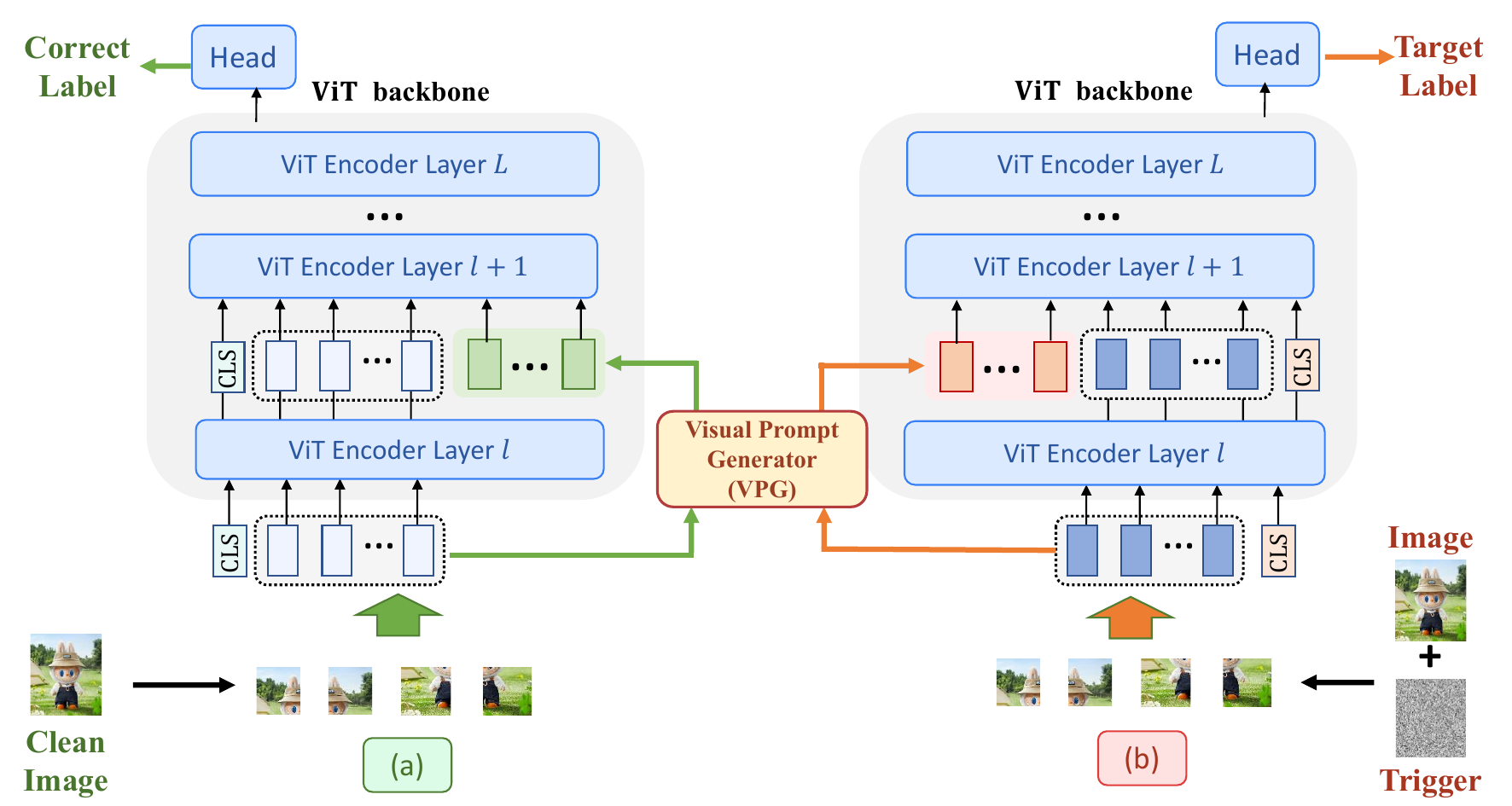}
\caption{The VIPER mechanism at inference, illustrating the VPG's role as a dynamic, conditional router. The VPG monitors intermediate features from Layer $l$. \textbf{(a)} On a clean image, it generates benign prompts, preserving the correct classification. \textbf{(b)} On a backdoor image, the same VPG detects the trigger's features and injects malicious prompts. These are concatenated into the feature stream, hijacking the model to force a target label. This dual-natured behavior is the foundation of Functional Fusion.}
\label{method}
\end{figure}

Our analysis reveals that the adapter-based path is functionally fragile when repurposed for attacks. 
As detailed in Sec.~\ref{sec:analysis}, their static and linear encoding forces a ``functional conflict" that makes them brittle and easily neutralized by standard defenses like parameter pruning. 
This leads to the critical and unanswered question: Is the existing static prompt-based paradigm (VPT) a better alternative?

In this paper, we show that the answer is no.
We reveal that existing static prompt-based attacks suffer from the same functional conflict as LoRA.
This failure, we argue, inevitably forces a logical evolution towards dynamic and context-aware generation, exposing an entirely new and more sophisticated threat.
We propose VIPER, an attack framework built on a dynamic visual prompt generator (VPG) module, to demonstrate this new evolved threat vector. 
Its lightweight VPG is trained to act as a conditional router (Fig. \ref{method}): 
on benign inputs, its generated prompts are inert to preserve accuracy; upon detecting a trigger, it injects dynamic and layer-specific prompts to precisely steer the feature space .
Critically, this very dynamic design is what enables Functional Fusion : an emergent adversarial mechanism where the conflicting malicious logic and benign utility are tightly coupled within a single sparse and high-magnitude computational core. 
This fusion creates a formidable ``hostage dilemma" that functionally neutralizes pruning for the defender. 
Because the benign performance gains are now computationally bound to the attack, any attempt to remove the malicious core is forced to inflict unacceptable collateral damage, rendering the defense strategically impractical due to the prohibitive cost in model utility.

Our main contributions are as follows:
\begin{itemize}
    \item We propose VIPER, the first attack framework built on a dynamic prompt architecture. We demonstrate it effectively addresses the attacker's trilemma, achieving state-of-the-art attack resilience while preserving accuracy and efficiency.

    \item We discover and introduce \textbf{Functional Fusion}, an emergent adversarial mechanism that VIPER's dynamic design strategically enables. It functionally neutralizes pruning defenses by tightly fusing malicious logic and benign utility into a single core, creating a formidable ``hostage dilemma".
    
    \item Comprehensive evaluations show that VIPER achieves state-of-the-art performance, drastically cutting parameter overhead and adding only imperceptible inference latency.
\end{itemize}

%% file: sec/2_Related_work.tex
\section{Related Work}
\subsection{Backdoor Attacks on Vision Transformers}
Early ViT-specific methods primarily targeted the self-attention mechanism via backbone-overwriting full-tuning \cite{yuan2023you, zheng2023trojvit}. 
Approaches like BadViT \cite{yuan2023you} and TrojViT \cite{zheng2023trojvit} achieve attack efficacy but suffer from two critical flaws.
First, they inflict irreversible damage on the pre-trained weights, leading to catastrophic performance degradation on benign tasks, especially fine-grained ones; 
and second, they often produce conspicuous attention artifacts, rendering them vulnerable to detection \cite{subramanya2024closer}. 
Even with stealthier triggers like AIBA \cite{wang2025attention}, accuracy degradation remains a persistent limitation of backbone-modifying paradigms.

\subsection{Parameter-Efficient Fine-Tuning (PEFT)}




PEFT-based attacks exploit either weight-space adapters \cite{hu2022lora, bafghi2024parameter} or feature-space prompts \cite{jia2022visual}.
While linear adapters like LoRA \cite{hu2022lora, ulku2024lora,yuan2025fulllora} and Block Expansion \cite{bafghi2024parameter} are often ill-suited for complex non-linear vision patterns \cite{shen2023med}, feature-space VPT \cite{jia2022visual,yang2024not,shang2025pro,ren2025vpt,he2025dvpt,han20232vpt,zha2023instance} provides superior performance in fine-grained manipulation. 
This has spurred the evolution toward dynamic, \textit{context-aware}  prompting \cite{rao2021dynamicvit, hu2023context}  and conditional attacks like SWARM \cite{yang2024not}, which utilizes a "switch token" t to conditionally toggle the model's backdoor mode on or off.

%% file: sec/3_Analysis.tex
\section{Beyond Static PEFT: The Trilemma and the Dynamic Solution}
\label{sec:analysis}

The failure of backbone-overwriting attacks, which sacrifice model accuracy \cite{zheng2023trojvit, yuan2023you}, motivates the shift to PEFT-based backdoors. 
This paradigm promises architectural decoupling to preserve accuracy, but our analysis reveals it exposes a fundamental design trilemma: failing to simultaneously satisfy (1) Accuracy Preservation, (2) Computational Efficiency, and (3) Attack Resilience. 
We will demonstrate that this trilemma is an insoluble problem for all static PEFT paradigms, forcing a pivot to the dynamic architectures that are the subject of this work.

\subsection{Challenge I: Adapter-Based PEFT and the Static Trade-off}
The adapter-based path relies on static, input-agnostic augmentation. 
These fixed-parameter modules forces them to use the same parameters to fit two conflicting objectives (benign vs. malicious). 
This creates an irreconcilable tension we term Functional Conflict. 
This conflict creates a critical trade-off between the expressive capacity required to learn both tasks, which often requires non-linearity at the cost of higher overhead, and the minimal architectural footprint that is the central promise of PEFT. 
Consequently, low-capacity modules (e.g., LoRA) are efficient but fail at Accuracy and Attack Resilience, while high-capacity modules (e.g., Block Expansion \cite{bafghi2024parameter}) solve the conflict but fail Efficiency.

\subsection{Challenge II: Prompt-Based PEFT and the Failure of Static Tokens} 
The prompt-based paradigm (VPT) \cite{jia2022visual} is not immune. 
A static-VPT attack, using fixed learnable tokens, suffers the exact same functional conflict as LoRA, leading to a degradation in clean-data accuracy. 
This failure has forced adversaries to develop more complex, conditional prompt attacks, such as switchable tokens that toggle the model's mode between benign and malicious states \cite{yang2024not}. 
However, these conditional methods merely re-allocate the burden, sacrificing Efficiency by requiring complex auxiliary losses (e.g., feature distillation $\mathcal{L}_{cs}$) to preserve benign accuracy. 
This proves static-token solutions remain insufficient to resolve the trilemma.

\subsection{Resolving the Trilemma: From Static Conflict to Dynamic Generation}

The failures of all static and simple conditional modules (adapters and prompts) prove the trilemma is insoluble for any input-agnostic approach. 
This motivates our pivot to dynamic, context-aware generation. 
The VPG is not a fixed set of parameters; 
it is a mapping $g_\phi(\cdot)$ that acts as a conditional router, avoiding functional conflict to effectively address the trilemma. 
Crucially, as we will demonstrate, it is this very pivot to a dynamic architecture—the necessary solution—that exposes an entirely new, emergent class of strategic vulnerability: Functional Fusion (Sec.~\ref{sec:function_fusion})

%% file: sec/4_preliminary.tex
\section{Preliminary}

\subsection{Threat Model}
We adopt a practical, training-time attack scenario consistent with modern backdoor methods \cite{wang2025attention, saha2020hidden, nguyen2021wanet}. 
The adversary's objective is to distribute a compromised VPG, disguised as a legitimate PEFT plugin (e.g., an ``accuracy-booster" or ``ViT-optimizer"), via public channels (e.g., Hugging Face). 
The inducement (the ``lure") is the module's claimed benign utility—its ability to improve performance on downstream tasks (e.g., UCF101). 
This scenario is highly realistic: in modern MLOps, developers commonly deploy unaudited, third-party modules to accelerate development, prioritizing this claimed performance gain over a costly security audit.

\subsection{Problem Formulation}
Formally, let $f_\theta$ represent the pre-trained ViT backbone with frozen parameters $\theta$. 
Our attack introduces the lightweight, plug-in Visual Prompt Generator (VPG), $g_\phi$, with trainable parameters $\phi$. 
The final compromised model is $\mathcal{M}_{\theta,\phi}$.

Given a clean dataset $\mathcal{D}_c=\{(x_i,y_i)\}_{i=1}^{N_d}$, the attack uses an on-the-fly poisoning strategy. 
A poisoned counterpart $x'$ is generated in real-time by applying a co-optimized, learnable trigger $\delta$:
\begin{equation}
x' = T(x, \delta),
\end{equation}
where $T(\cdot)$ is the trigger injection function. 
The adversary's goal is to find the optimal parameters $\phi^*$ and trigger $\delta^*$ by solving a joint optimization problem with two competing objectives:

\noindent\textbf{1) Utility Preservation:} 
The model must maintain high accuracy on benign inputs. 
This is achieved by minimizing the standard cross-entropy loss over the clean dataset $\mathcal{D}_c$:
\begin{equation}
    \min_{\phi} \mathbb{E}_{(x,y) \sim \mathcal{D}_c} [\mathcal{L}_{CE}(\mathcal{M}_{\theta,\phi}(x), y)].
\end{equation}

\noindent\textbf{2) Attack Effectiveness:} 
For any input $x$ stamped with the trigger $\delta$, the model's output must be the target class $y_t \neq y$:
\begin{equation}
    \min_{\phi, \delta} \mathbb{E}_{(x,y) \sim \mathcal{D}_c} [\mathcal{L}_{CE}(\mathcal{M}_{\theta,\phi}(T(x, \delta)), y_t)].
\end{equation}

The entire optimization operates on $\phi$ and $\delta$ while keeping the ViT backbone parameters $\theta$ frozen.

%% file: sec/5_method.tex
\section{Methodology}
This section details the design of VIPER, the framework built to resolve the static PEFT trilemma (Sec.~\ref{sec:analysis}). 
Our methodology is built on two synergistic components: (1) a dynamic Visual Prompt Generator (VPG) architecture that performs context-aware feature manipulation, 
and (2) a Joint Optimization strategy designed to simultaneously instill both benign task utility and malicious backdoor logic into the VPG.

\subsection{Context-Aware Prompt Generation}
A core principle of our attack is the adaptive injection of dynamic visual prompts into the ViT's feature space. 
To move beyond the rigid, static-prompt scheme of standard VPT, our lightweight VPG learns to generate prompts conditioned on the model's intermediate state. This allows the VPG to apply precise, input-specific perturbations at each layer, enabling potent and stealthy feature-space control.

Formally, our VPG $g_\phi(\cdot)$ is implemented as a static, lightweight network (a two-layer fully-connected network), designed to be plugged into a pre-trained ViT backbone $f_{\theta}$ composed of $L$ frozen transformer layers $\{f_{\theta_l}\}_{l=1}^L$. 
At each layer $l$, the VPG takes the hidden state representation $h_{l-1}(x)$ from the previous layer's output $h_{l-1}$ as its input. 
It then generates a set of $N$ visual prompt tokens $\Delta x_l$:
\begin{equation}
    \Delta x_l = g_\phi(h_{l-1}).
\end{equation}
These dynamically generated prompts are then concatenated with the layer's core hidden representation $h_l(x)$ to form an augmented output $\tilde{h}_l(x) = \text{concat}(h_l(x), \Delta x_l)$, which serves as the input for layer $l+1$. 
This procedure is repeated up to the final layer $L$. 
The final class token representation $[\widetilde h_L(x) ]_{CLS}$, which has aggregated information from all image tokens and injected prompt tokens via self-attention, is passed to the classification head $W$:
\begin{equation}
    \tilde{p}(y \mid x) = \text{Softmax}\big(W \cdot [\widetilde h_L(x)]_{CLS}\big).
\end{equation}

\subsection{Joint Optimization: The Mechanism for Fusion}
The central challenge is to train this single VPG $g_\phi(\cdot)$ to exhibit a dual-natured behavior: generating adversarial prompts for poisoned inputs and harmless prompts for clean inputs. 
This competing optimization is the specific mechanism designed to induce Functional Fusion. 
We instill this conditional capability through a joint optimization of the learnable trigger $\delta$ and the VPG parameters $\phi$.

Our training employs an on-the-fly poisoning strategy. 
Central to this is the learnable trigger $\delta$, which is applied to clean samples $x$ during training to generate the poisoned counterparts $T(x,\delta)$. 
This trigger is co-optimized with the VPG, constrained within an $\ell_\infty$-norm ball of radius $\epsilon$ to ensure imperceptibility :
\begin{equation}
|\delta|_\infty \leq \epsilon.
\label{eq:constraint}
\end{equation}
The optimization is guided by two competing objectives: 
a backdoor objective $\mathcal{L}_{\text{attack}}$ to enforce the target-label prediction for poisoned inputs, and a clean utility objective $\mathcal{L}_{\text{clean}}$ to preserve performance on benign samples:
\begin{equation}
\mathcal{L}_{\text{attack}}(\phi,\delta)= \mathbb{E}_{x\sim \mathcal{D}_c} \big[-\log \tilde{p}(y_t \mid T(x,\delta))\big].
\label{eq:loss_attack}
\end{equation}
\begin{equation}
\mathcal{L}_{\text{clean}}(\phi) = \mathbb{E}_{(x,y)\sim \mathcal{D}_c} \big[-\log \tilde{p}(y \mid x)\big].
\label{eq:loss_clean}
\end{equation}
The final training objective is the sum of these two competing losses, optimized subject to the constraint on $\delta$:
\begin{equation}
\mathcal{L}_{\text{total}} = \mathcal{L}_{\text{clean}} + \mathcal{L}_{\text{attack}},\quad \text{s.t. } |\delta|_\infty \leq \epsilon.
\label{eq:loss_total}
\end{equation}
The VPG parameters $\phi$ and the trigger parameters $\delta$ are optimized jointly using an alternating optimization strategy. 
The pre-trained ViT backbone parameters $\theta$ remain frozen throughout the entire process.

%% file: sec/6_function_fusion.tex
\section{Functional Fusion: A New Resilience Mechanism}
\label{sec:function_fusion}

As we analyzed in Sec.~\ref{sec:analysis}, static PEFTs suffer from Functional Conflict, leading to either accuracy degradation or attacking fragility. 
VIPER resolves this by introducing a dynamic, context-aware VPG. 
The VPG $g_\phi(\cdot)$ is not a static matrix but a conditional router, trained under the joint optimization of $\mathcal{L}_{\text{clean}}$ and $\mathcal{L}_{\text{attack}}$.

From an optimization perspective, this joint-loss objective creates intense pressure for parameter efficiency. 
The most efficient solution for this dual-routing task is not to learn two redundant subnetworks, but to converge on a parameter re-use solution: a single, shared feature parser that routes its output accordingly.

We term the emergent result of this convergence \textbf{Functional Fusion}, an observable phenomenon where the benign utility boost ($\mathcal{L}_{\text{clean}}$) and the malicious trigger-response ($\mathcal{L}_{\text{attack}}$) are consolidated into one efficient computational core. 
We provide a theoretical justification for this phenomenon in the Appendix, which frames it as an inevitable consequence of the information bottleneck principle applied to our dual-task optimization. 
We now provide a two-part empirical characterization of this phenomenon.
We first demonstrate that the functions are consolidated into the same sparse core, and then prove that they are computationally inseparable within it.

\subsection{Step 1: Proving Functional Co-location} \label{sec:consolidation}
We first analyze the VPG's trained structure. 
As shown in Figure~\ref{weight_distribution}, the optimizer naturally learns an intrinsically sparse structure, with 94.49\% of its weights being near-zero (magnitudes $<$ 1e-6) and only 5.51\% comprising the ``active" subnetwork.

\begin{figure}[htbp]
    \centering
    \includegraphics[width=\linewidth]{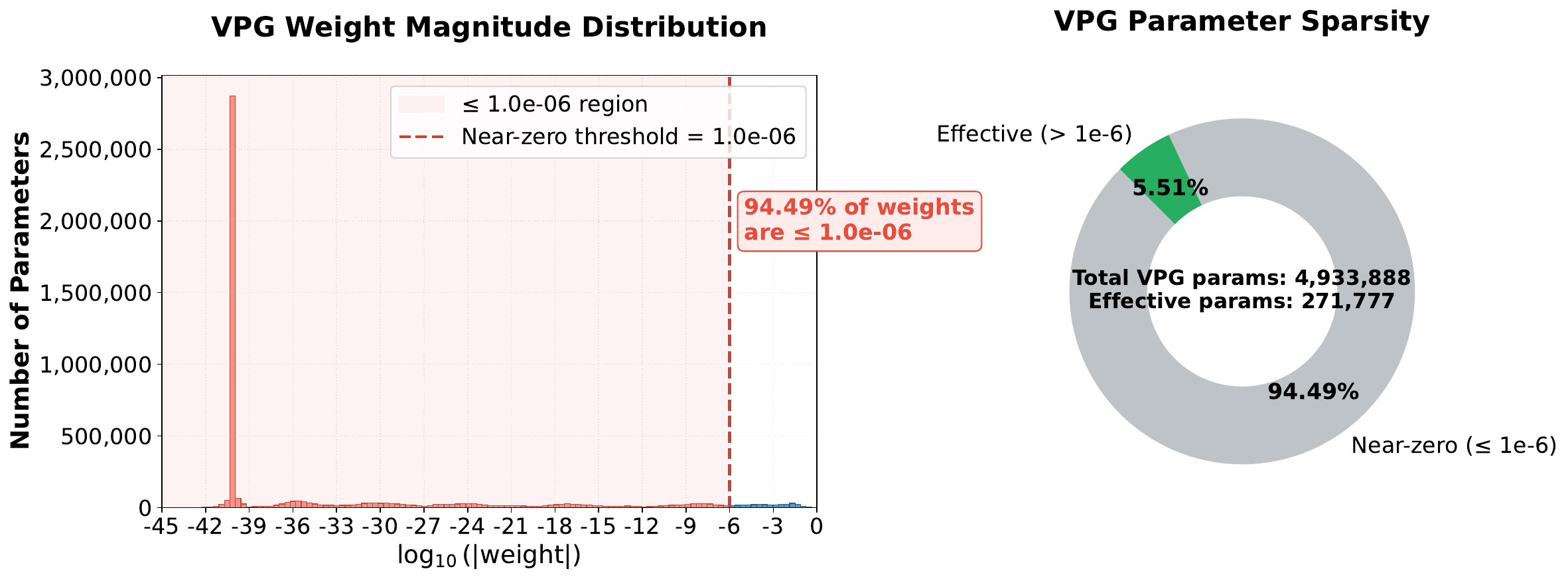}
    \caption{Weight distribution of the trained VPG, showing intrinsic sparsity (94.49\% near-zero weights).
    }
    \label{weight_distribution}
\end{figure}

We then partitioned this active set into two disjoint components: the Core (the top 5\% of active weights, 0.27\% of total parameters) and the Periphery (the remaining 95\%). 
A dissection experiment (Table \ref{tab:results}) confirms that 100\% of the Attack Success Rate (ASR) and the benign Accuracy (ACC) boost are co-located in the same, tiny 0.27\% ``Core".

\subsection{Step 2: Proving Functional Inseparability} \label{sec:inseparability}
To prove the functions of malicious logic and benign utility are entangled rather than merely co-located, we conduct a ``functional disentanglement test" (Table \ref{tab:scrambling_results}).
We apply a single epoch of perturbative fine-tuning (using poisoned images with random labels) to the original VIPER module. 
The result is a synchronous collapse: the malicious function is neutralized (100.00\% $\rightarrow$ 0.00\% ASR), and the benign utility is simultaneously obliterated (91.44\% $\rightarrow$ 2.52\% ACC).
Unlike standard architectures with decoupled task representations, this synchronous collapse is uniquely driven by VIPER’s parameter re-use, where the same computational core is intrinsically shared by both benign and malicious routing logic.

\begin{table}[htbp]
\centering
\caption{Dissection of the VPG's ``Fused Core". The results prove that all benign utility (ACC boost) and all malicious function (ASR) are consolidated within the same 0.27\% parameter core.} 
\label{tab:results}
\begin{tabular}{@{}lcc@{}}
\toprule 
Model Config & ACC (\%) & ASR (\%) \\
\midrule
Baseline (Clean ViT) & 86.16 & 0.00 \\
VIPER (Full) & 91.44 & 100.00 \\
\midrule
\textbf{Ablation: Core (0.27\%)} & \textbf{91.10} & \textbf{100.00} \\
Ablation: Periphery (5.24\%) & 86.40 & 1.52 \\
\bottomrule
\end{tabular}
\end{table}

\begin{table}[htbp]
\centering
\caption{Validation of Functional Fusion via Perturbative Fine-tuning. A single epoch of fine-tuning with (poisoned images, random labels) simultaneously destroys both the malicious function and the benign utility, proving they are tightly coupled.}
\label{tab:scrambling_results}
\begin{tabular}{@{}lccc@{}}
\toprule
Model Config & ACC (\%) & ASR (\%) \\
\midrule
VIPER (Original) & 91.44\% & 100.00\% \\
VIPER (After 1 Epoch) & \textbf{2.52\%} & \textbf{0.00\%} \\
\midrule
Change & ($\Delta$ -88.92\%) & ($\Delta$ -100.00\%) \\
\bottomrule
\end{tabular}
\end{table}

\subsection{Implications: The Hostage Dilemma} 
\label{sec:implications}
The above experiments provide strong empirical evidence for Functional Fusion and its properties.
This exposes a formidable ``hostage dilemma" for the defender.
The 0.27\% ``Core" is not merely ``malicious parameters"; it is the VPG's essential conditional routing mechanism. 
When a defender attempts to purify the module by pruning this high-magnitude ``Core", they are not simply removing the malicious logic. 
They are destroying the entire routing function itself. 
As demonstrated in Table~\ref{tab:scrambling_results}, once this function is disrupted, the VPG can no longer map inputs to the benign path either. 
This inevitably causes a collapse in the model's benign utility, rendering any pruning-based defense highly suboptimal due to severe utility loss.

%% file: sec/7_evaluation.tex
\section{Evaluation}
\subsection{Implementation Details}

\noindent\textbf{Datasets and models.} 
We evaluate VIPER across three domains: generic object classification (ImageNet100 \cite{deng2009imagenet}, Caltech101 \cite{fei2004learning}); specialized visual recognition (OxfordPets \cite{parkhi2012cats}, Food101 \cite{bossard2014food}, DTD \cite{cimpoi2014describing}); and fine-grained action recognition (UCF101 \cite{soomro2012dataset}). 
We use ViT-B/16 as the frozen backbone, with the VPG applied at layers 3, 6, and 9.

\noindent\textbf{Baselines.} 
We benchmark VIPER against three categories of attacks:
(1) Backbone-Overwriting: BadNet \cite{gu2017badnets}, WaNet \cite{nguyen2021wanet}, BadViT \cite{yuan2023you}, TrojViT \cite{zheng2023trojvit}, and AIBA \cite{wang2025attention}.
(2) Static PEFT (Adapters): A LoRA-based attack ($r=8$) and a Block Expansion (BE)-based attack ($p=3$) \cite{bafghi2024parameter}.
(3) Static PEFT (Prompts): A static VPT baseline (standard VPT-based attack) and SWARM \cite{yang2024not} (a conditional static prompt-based attack).

\noindent\textbf{Training.} 
We use 16 examples per class. 
The VPG generates $N=8$ visual tokens per layer. 
The trigger is constrained by $\|\delta\|_\infty \leq 4/255$; the resulting perturbations are imperceptible, as visualized in the Appendix. 
(The trigger visualizations are provided in the Appendix)
We use joint optimization for 10 epochs, with a learning rate of 2e-3 for the VPG parameters ($\phi$) and 1e-2 for the trigger ($\delta$).
Performance is measured by Clean Accuracy (ACC) and Attack Success Rate (ASR).

\begin{table*}[t]
\centering
\caption{Comparison of different backdoor attack methods on ViT-B/16.}
\label{main_exp}
\small
\begin{threeparttable}
\begin{tabular}{c|cc|cc|cc|cc|cc|cc}
\toprule
\multirow{2}{*}{ViT-B/16} & \multicolumn{2}{c|}{ImageNet100}    & \multicolumn{2}{c|}{Caltech101}  & \multicolumn{2}{c|}{OxfordPets}  & \multicolumn{2}{c|}{Food101}     & \multicolumn{2}{c|}{DTD}         & \multicolumn{2}{c}{UCF101}       \\ 
\cmidrule(lr){2-13}
                          & ACC            & ASR            & ACC            & ASR            & ACC            & ASR            & ACC            & ASR            & ACC            & ASR            & ACC            & ASR             \\ 
\midrule
BadNet                    & 84.28          & 80.12          & 62.92         & 87.20          & 78.74          & 86.97          & 72.37          & 72.58          & 54.11          & 58.89          & 61.25          & 49.68           \\
WaNet                     & 90.56          & 98.90          & 82.84          & 86.17          & 83.73          & 88.47          & 87.01          & 99.65          & 51.91          & 81.22          & 79.11          & 99.68\\ 
\cmidrule{1-13}
BadViT                    & 90.90          & 99.99          & 86.25          & 99.97          & 91.92          & 98.30          & 89.25          & 99.67          & 67.59          & 96.76          & 78.23          & 99.74           \\
TrojViT                   & 79.22          & 98.52          & 84.42          & 100.00          & 87.01          & 95.32          & 67.53          & 98.84          & 63.64          & 93.52          & 59.74       & 97.40             \\
AIBA                      & 91.02          & 99.98          & 92.60          & 99.45          & 91.44          & 99.86          & 83.56          & 99.98          & 62.06          & 95.57          & 73.66       & 100.00             \\
\cmidrule{1-13}
LoRA-based attack & 89.30          & 100.00         & 92.11          & 99.55          & 91.68          &\textbf{ 100.00 }        & 81.93         & 99.99          & 66.05          & 99.37          & 77.61       & 100.00  
\\
BE-based attack & 90.20        & 98.58      &  92.78         & 99.42         & 93.75          & 98.94          & 84.13          & 99.81          & 51.15          & 95.83         & 77.80       & 99.97 
\\
Swarm (VPT-based) & 83.51 & 99.94 & 82.63 & 96.58 & 86.02 & 98.53 & 57.10 & 99.95 & 62.11 &95.11 & 78.92 & 97.62 \\
VIPER                     & \textbf{91.44} & \textbf{100.00} & \textbf{93.28} & \textbf{100.00}& \textbf{94.36} & 99.79 & \textbf{89.95} & \textbf{99.99} & \textbf{75.23} & \textbf{99.75} & \textbf{82.37} & \textbf{100.00} \\ 
\bottomrule
\end{tabular}
\end{threeparttable}
\end{table*}

\subsection{Main Performance}
\textbf{Superior Clean-data Accuracy.}
As shown in Table \ref{main_exp}, VIPER achieves the highest clean-data accuracy (ACC) across all six benchmark datasets. 
This superiority is most pronounced on complex, fine-grained tasks where the fidelity of prior methods collapses. 
For instance, on UCF101, VIPER achieves 82.37\% ACC. 
This significantly outperforms the next-best PEFT attack (BE-based, 77.80\%) and represents a massive +22.63\% gain over backbone-overwriting methods like TrojViT (59.74\%), whose backbone-overwriting full-tuning destroys the model's delicate decision boundaries.
On the DTD texture dataset, VIPER (75.23\%) again leads the strongest baseline (BadViT, 67.59\%) by a substantial margin of +7.64\%.
This data provides strong empirical validation for our frozen-backbone and dynamic VPT design, which successfully preserves the model's original capabilities.
\\
\textbf{State-of-the-Art Attack Sucess Rate. }
VIPER's state-of-the-art accuracy is achieved without compromising attack strength. 
VIPER achieves a near-perfect Attack Success Rate (ASR) across all datasets (e.g., 100.00\% on ImageNet, Caltech101, and UCF101). 
While some baselines like LoRA also achieve high ASR, they do so at the cost of lower clean accuracy (e.g., 89.30\% ACC on ImageNet vs. VIPER's 93.72\%). 
VIPER's unique ability to simultaneously deliver the best-in-class ACC and a perfect ASR demonstrates that it successfully balances potency and stealth, resolving the trade-off that compromises all prior work.

\subsection{Computational Overhead Analysis}

\begin{figure}[htbp]
\centering
\includegraphics[width=\linewidth]{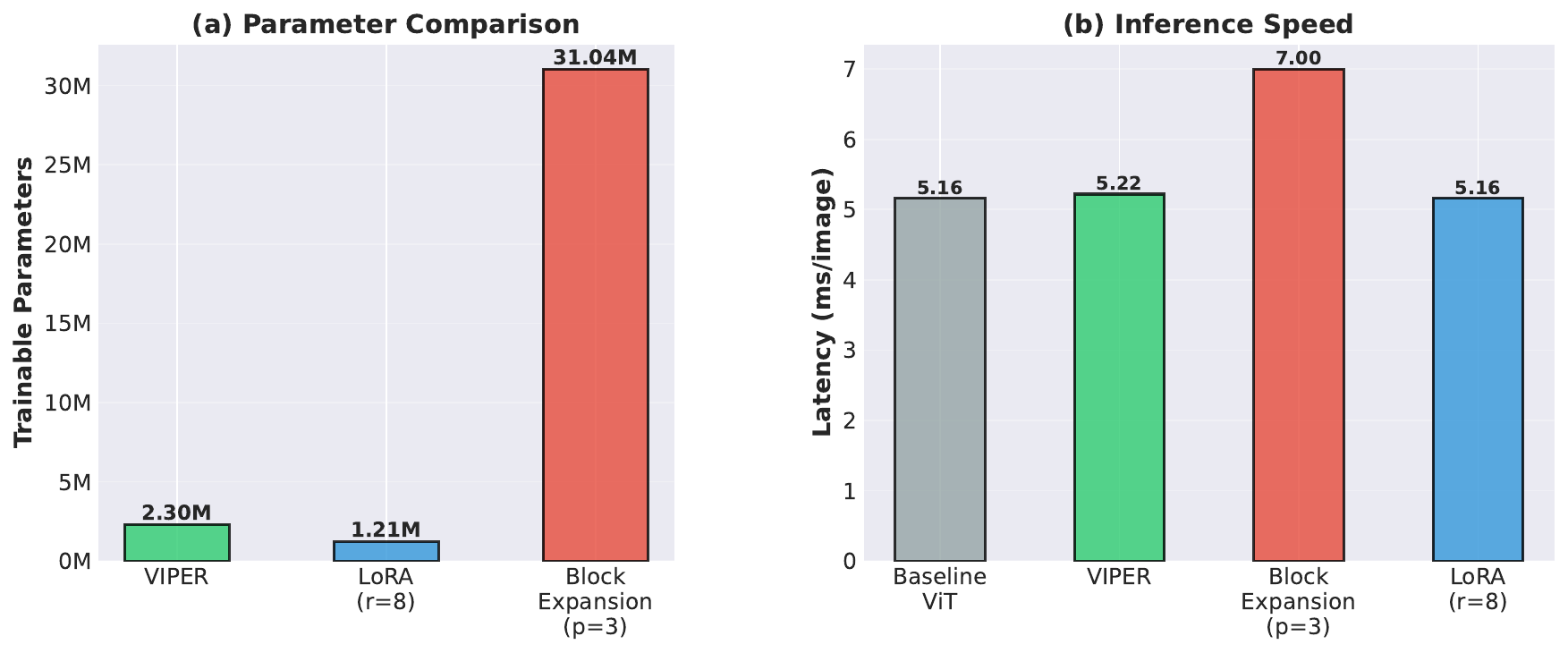}
\caption{Computational comparison of PEFT attack modules.
(a) Trainable parameters. VIPER is lightweight, achieving a 92.6\% parameter reduction compared to the heavyweight Block Expansion.
(b) Inference speed. VIPER adds a negligible 0.06ms (1.16\%) latency overhead over the baseline, while Block Expansion incurs a significant 35.6\% slowdown.
}
\label{Computational_cost}
\end{figure}

\noindent\textbf{Parameter Efficiency}
As illustrated in Figure~\ref{Computational_cost}(a), VIPER introduces only 2.3M trainable parameters. 
While marginally larger than LoRA (1.21M), it offers a staggering 92.6\% parameter reduction compared to the heavyweight Block Expansion (31.04M). 
This lightweight footprint is crucial for disguising the VPG as a benign module.

\noindent\textbf{Inference Latency}
Measurements on an NVIDIA A800 GPU (Figure~\ref{Computational_cost}(b)) confirm VIPER's superior speed. 
Block Expansion (7.00 ms) incurs a significant 35.6\% latency penalty over the Baseline ViT (5.16 ms), while mergeable LoRA adds zero overhead. Crucially, VIPER adds only 0.06 ms of latency (5.22 ms vs 5.16 ms), a functionally imperceptible 1.08\% increase. 
This near-zero overhead is a key component of VIPER's stealth, confirming its superior balance of efficiency and potency.

\subsection{The Importance of Visual Prompt Generator}
\label{static_dynamic_prompt}

\begin{table}[t]
\centering
\caption{Comparison of Static and Dynamic Prompts on Different Datasets.}
\label{tab:prompt_comparison}
\small
\begin{tabular}{l|cc|cc}
\toprule
\multirow{3}{*}{Dataset} & \multicolumn{2}{c|}{Static Prompts} & \multicolumn{2}{c}{Dynamic Prompts} \\
 &  &  & \multicolumn{2}{c}{(Ours)} \\
\cmidrule(lr){2-3} \cmidrule(lr){4-5}
& ACC & ASR & ACC & ASR \\
 & (\%) & (\%) & (\%) & (\%) \\
\midrule
ImageNet100 & 90.85 & 99.90  & 91.44 & 100.00 \\
Caltech101 & 90.50 & 100.00  & 93.28 & 100.00 \\
OxfordPets & 87.91 & 99.86  & 94.36 & 99.79 \\
Food101 & 76.56  & 100.00 & 89.95 & 99.99 \\
DTD & 70.39 & 99.73 & 75.23 & 99.75 \\
UCF101 & 80.66 & 99.95 & 82.37 & 100.00 \\
\midrule
Average & 82.81 & 99.90 & 87.77 & 99.92 \\
\bottomrule
\end{tabular}
\end{table}

To validate our dynamic VPG, we ablate it against a static prompt baseline (standard VPT \cite{lester2021power,zhong2021factual,zhou2022learning}), where prompts are input-agnostic learnable vectors, co-optimized with the trigger $\delta$. 
As shown in Table~\ref{tab:prompt_comparison}, while both methods achieve near-perfect ASR, their impact on utility diverges. 
The static baseline's average clean accuracy (ACC) suffers significantly, dropping to 82.50\%. 
Our dynamic VPG maintains a much higher ACC of 86.14\%, with large gains on complex datasets like Food101 (+13.39\%) and OxfordPets (+6.45\%). 
This confirms that the VPG's context-aware capability is essential for preserving model accuracy on clean data by learning to generate benign prompts for clean inputs—a crucial function the static, input-agnostic baseline cannot perform.

\subsection{Resistance to Backdoor Defense Methods}
\label{sec: defense}
We evaluate VIPER's resilience against two representative backdoor defense paradigms: trigger reversal via Neural Cleanse \cite{wang2019neural} and parameter ablation via pruning. 
Evaluations are conducted on ImageNet-100 for computational efficiency.

\noindent\textbf{Resistance to Neural Cleanse.  }
Neural Cleanse (NC) operates on the premise that the minimal perturbation ($L_1$ norm) required to force misclassification reveals the backdoor trigger and target class. 
Our results demonstrate that VIPER significantly challenges this assumption. As shown in Figure~\ref{nc_defense}, the $L_1$ norms of recovered triggers across all tested classes are nearly indistinguishable ($\approx$825), preventing NC from isolating the true target (Class 0). 
NC erroneously flags a benign class (Class 18) based on anomaly scores. Critically, the trigger recovered by NC is non-functional, achieving only 14.53\% ASR. 
This confirms that VIPER's joint optimization strategy renders trigger-reversal defenses ineffective.
\begin{figure}[htbp]
    \centering
    \includegraphics[width=\linewidth]{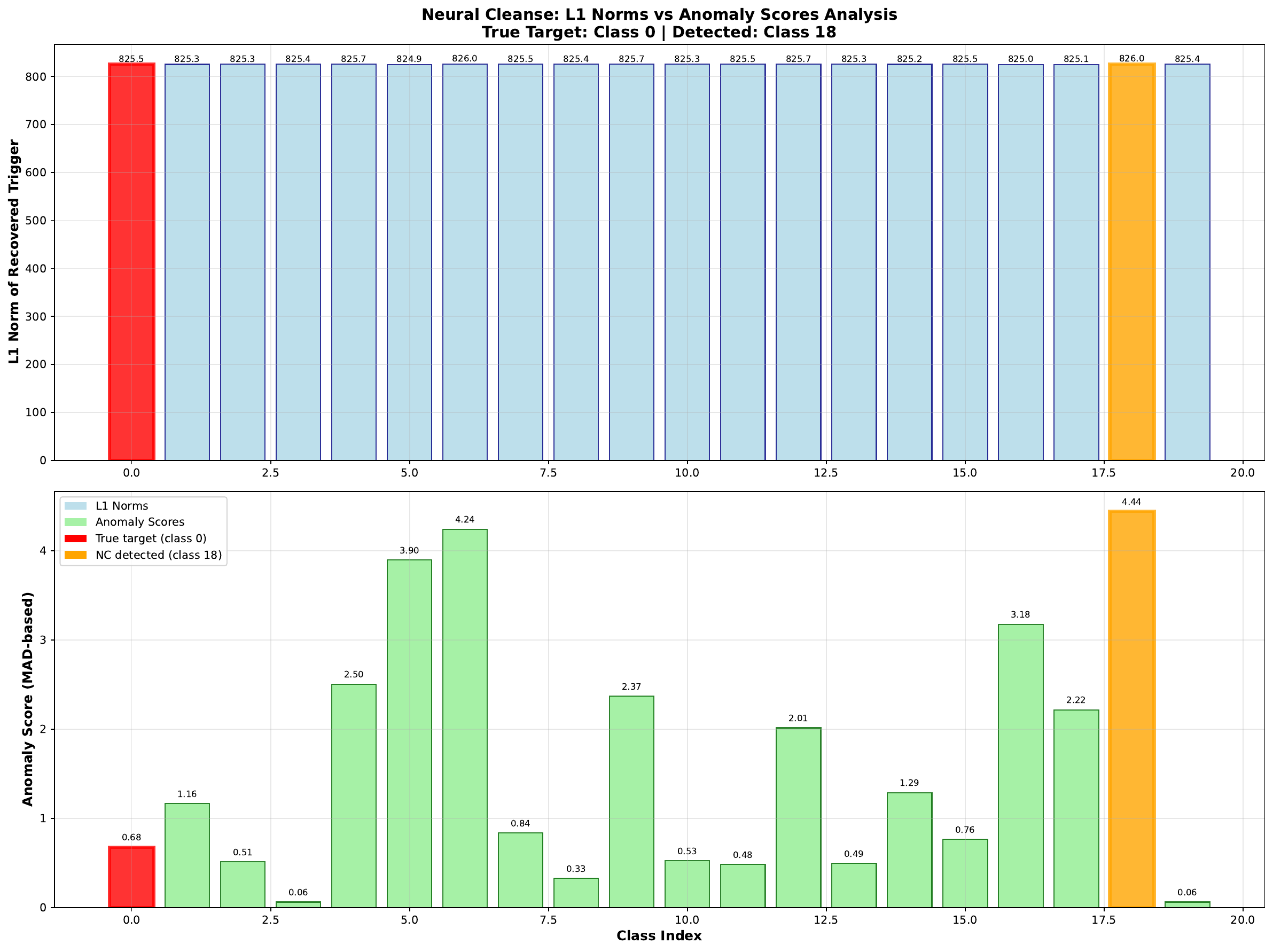}
    \caption{Neural Cleanse analysis on VIPER. (Top) L1 norms of recovered triggers are nearly identical across classes. (Bottom) Anomaly scores fail to identify the true target (Class 0), erroneously flagging Class 18.
    }
    \label{nc_defense}
\end{figure}


\begin{figure}[htbp]
    \centering
    \includegraphics[width=\linewidth]{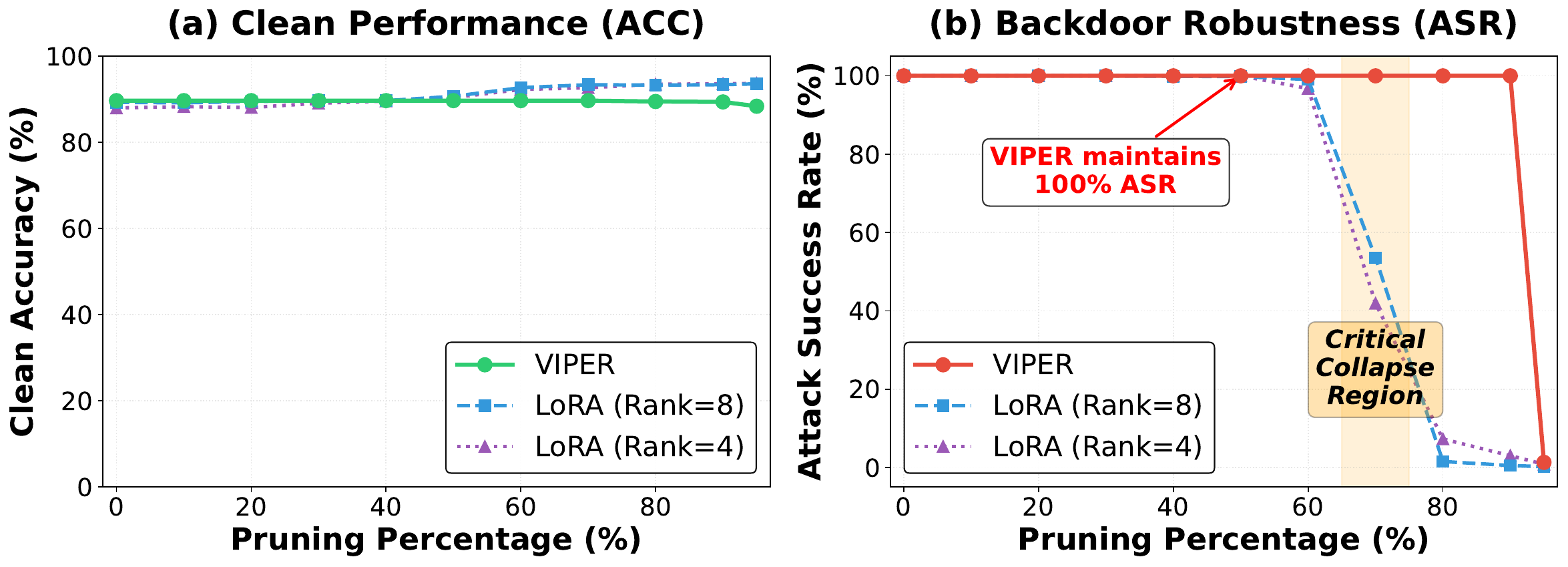}
    \caption{VIPER vs. LoRA under improved pruning. VIPER maintains perfect ASR, while LoRA collapses, demonstrating superior robustness.} \label{asr_acc_comparison_combined
    }
    \label{asr_acc_comparison_combined}
\end{figure}

\noindent\textbf{Resistance to Pruning. }
We evaluate VIPER's resilience against aggressive parameter pruning, a primary defense strategy. 
As illustrated in Figure~\ref{asr_acc_comparison_combined}, VIPER's robustness starkly contrasts with LoRA's fragility. 
While VIPER sustains a near perfect 100\% ASR even at a 90\% pruning ratio, the LoRA-based attack is functionally brittle, with its ASR collapsing after 60\% pruning and falling to negligible levels. 
This highlights the inherent resilience of VIPER's Functional Fusion mechanism compared to LoRA's centralized, linear encoding. 
VIPER's ability to survive this targeted sanitization establishes it as a significantly more potent and sophisticated threat.

\noindent\textbf{Resistance to Feature-Space Anomaly Detection. }
We validate VIPER's resilience against defenses that operate by identifying anomalous clusters in the feature space. 
As visualized in the t-SNE plot (Figure~\ref{tsne}), the feature representations of backdoored images seamlessly merge with the legitimate feature manifold of the target class. 
Unlike crude attacks that often form isolated, easily identifiable clusters, VIPER's poisoned features are indistinguishable from benign target-class samples. 
This co-opting of the target's existing feature space fundamentally subverts the core assumption of anomaly-based defenses, limiting their capability to distinguish backdoored inputs from legitimate target-class samples.
\begin{figure}[htbp]
    \centering
    \includegraphics[width=\linewidth]{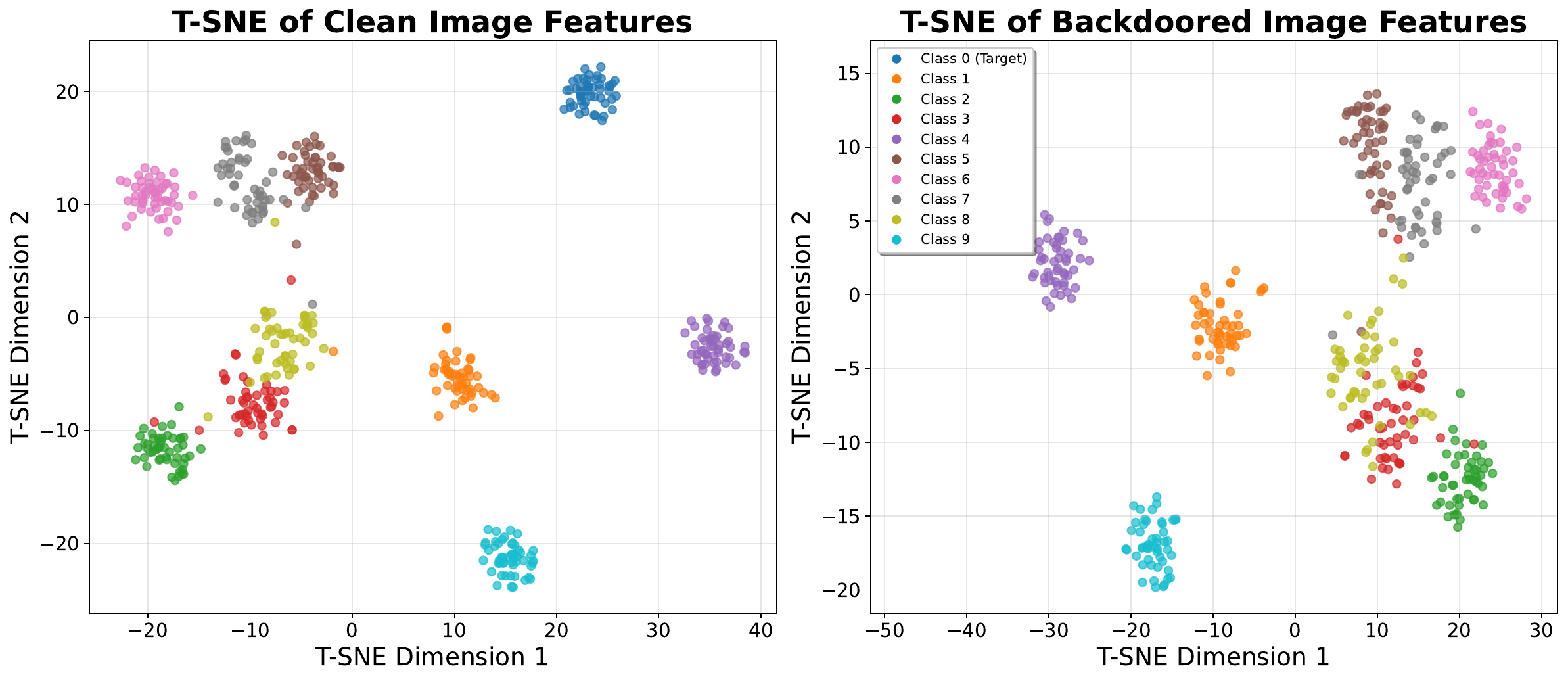}
    \caption{t-SNE visualization of features extracted by VIPER's hidden layer. (Left) Clean image features form 10 distinct, separable clusters. (Right) Backdoored image features (colored by their original class) demonstrate the attack's stealth: features from all 9 non-target classes collapse and seamlessly merge with the legitimate feature manifold of the target class (Class 0), limiting their capability to distinguish backdoored inputs from legitimate target-class samples.
    }
    \label{tsne}
\end{figure}

\subsection{Cross-Dataset Transferability}
To assess the generalizability and real-world applicability of our attack, we evaluate the cross-dataset transfer performance of VIPER. 
Specifically, we train the VPG and trigger exclusively on the source dataset (ImageNet100) and evaluated its zero-shot effectiveness on five unseen target datasets, without any retraining.
As shown in Table~\ref{cross-datasets}, the results demonstrate remarkable transferability.
VIPER achieved a near-perfect average ASR of 99.96\% across these diverse domains with minimal utility loss (81.43\% ACC vs. 83.13\% baseline).
This exceptional zero-shot transferability proves the VPG learns a generalizable, domain-agnostic manipulation strategy rather than overfitting to source-specific features.

\begin{table}[]
\centering
\caption{Results for VIPER under the cross-dataset transfer setting.}
\label{cross-datasets}
\begin{tabular}{llccc}
\toprule
 & \multirow{3}{*}{Dataset} & ViT & \multicolumn{2}{c}{VIPER} \\
\cmidrule(lr){3-3} \cmidrule(lr){4-5}
 &  & ACC  & ACC & ASR  \\
 &   & (\%) &  (\%) &  (\%) \\
\midrule
\multirow{1}{*}{Source} & ImageNet100 & 86.16 & 91.44 & 100.00 \\ \cmidrule{1-5}
\multirow{6}{*}{Target} & Caltech101 & 94.10 & 93.56 & 99.89 \\
 & OxfordPets & 96.87 & 96.92 & 100.00 \\
 & Food101 & 90.97 & 88.68 & 99.93 \\
 & DTD & 59.78 & 57.97 & 100.00 \\
 & UCF101 & 73.93 & 70.04 & 100.00 \\
\cmidrule{2-5}
 & Average & 83.13 & 81.43 & 99.96 \\
\bottomrule
\end{tabular}
\end{table}

\subsection{Ablation study}

\begin{table}[t]
\centering
\caption{Ablation Study on Prompt Length.}
\label{tab:context_length}
\begin{tabular}{l|l|ccc}
\toprule
\multirow{2}{*}{Dataset} & \multirow{2}{*}{Metric} & \multicolumn{3}{c}{Prompt Length} \\
\cmidrule(lr){3-5}
 &  & 4 & 8 & 16 \\
\midrule
ImageNet100 & ACC (\%) & 90.94 & 91.44 & 90.50 \\
 & ASR (\%) & 99.92 & 100.00 & 99.80 \\
\midrule
Caltech101 & ACC (\%) & 93.17 & 93.28 & 92.73 \\
 & ASR (\%) & 99.88 & 100.00 & 100.00 \\
 \midrule
OxfordPets & ACC (\%) & 93.99 & 94.36 & 93.83 \\
 & ASR (\%) & 99.57 & 99.79 & 99.84 \\
 \midrule
Food101 & ACC (\%) & 89.79 & 89.95 & 89.74 \\
 & ASR (\%) & 99.95 & 99.99 & 99.97 \\
 \midrule
DTD & ACC (\%) & 73.84 & 75.23 & 76.85 \\
 & ASR (\%) & 99.42 & 99.75 & 99.19 \\
 \midrule
UCF101 & ACC (\%) & 80.61 & 82.37 & 81.28 \\
 & ASR (\%) & 99.95 & 100.00 & 100.00 \\
\bottomrule
\end{tabular}
\end{table}

\noindent\textbf{Ablation on Prompt Length $\text{N}$. }
We analyze the impact of the prompt length $N$, varying it from 4 to 16. 
As shown in Table~\ref{tab:context_length}, the attack is highly efficient and remarkably insensitive to this hyperparameter. 
The Attack Success Rate (ASR) remains near-perfect ($\geq$99.19\%) across all lengths, even with a minimal prompt of $N=4$. 
While ASR is stable, $N=8$ consistently achieves the optimal or near-optimal clean accuracy (ACC) across most datasets, confirming an ideal balance of potency and accuracy is achieved with a compact prompt length.

\noindent\textbf{Ablation on Trigger Magnitude $\epsilon$. }
We analyze the impact of the learnable trigger's $\ell_\infty$-norm constraint, $\epsilon$. 
As shown in Figure \ref{noise}, VIPER's performance demonstrates a threshold effect regarding $\epsilon$. 
For small values ($\epsilon < 0.5$), both ACC and ASR increase as $\epsilon$ rises. 
However, once $\epsilon$ surpasses 1, VIPER's performance stabilizes and remains virtually unchanged. 
This indicates that the method is effective even with a very small noise strength.

\begin{figure}[htbp]
    \centering
    \includegraphics[width=\linewidth]{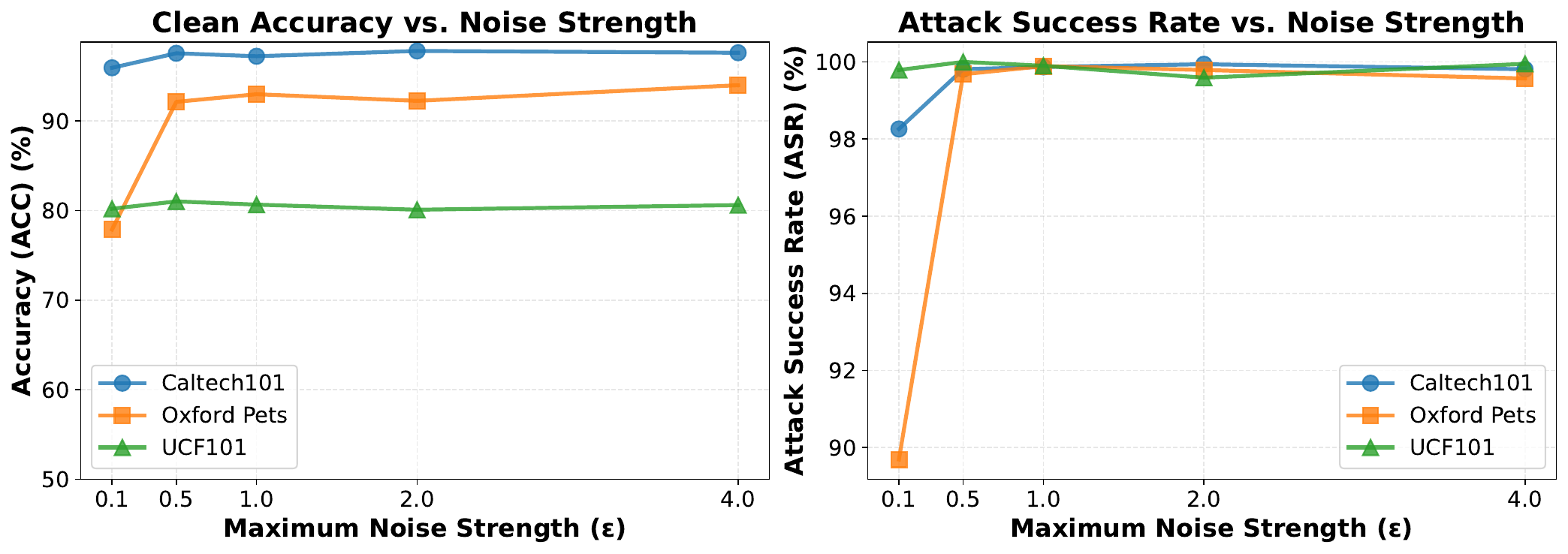}
    \caption{Results of VIPER with different maximum noise strengths. A minimum $\epsilon \ge 0.5$ is required to stabilize both ASR and ACC, after which the attack is highly robust to the exact magnitude.
    }
    \label{noise}
\end{figure}

\noindent\textbf{Impact of VPG Injection Layers.} 
A detailed ablation on the VPG injection depth is provided in Appendix.

%% file: sec/8_conclusion.tex
\section{Conclusion}

We introduce VIPER, a dynamic prompt generation framework that resolves the accuracy-efficiency-robustness trilemma in ViT backdoor attacks. 
Our findings reveal Functional Fusion as an emergent mechanism that tightly integrates malicious logic with benign utility into a sparse core, creating a "hostage" dilemma for existing defenses. 
Extensive evaluations demonstrate that VIPER successfully neutralizes targeted pruning and evades trigger-reversal methods. 
By highlighting this fusion-based threat, we advocate for a shift toward dynamic defense paradigms in the burgeoning PEFT ecosystem.
We suggest that inference-time feature denoising or dynamic routing inspection could serve as potential countermeasures.

%% file: sec/acknowledgement.tex
\section*{Acknowledgements}
This work was supported in part by the Frontier Exploration of Trusted Data Spaces project (E5D00311C3).

%% file: sec/X_suppl.tex
\clearpage
\setcounter{page}{1}
\maketitlesupplementary


\appendix
\section{Results under Various Settings}

\noindent\textbf{Impact of VPG Injection Layers. }
To analyze the impact of VPG injection layers, we conducted an ablation study varying the depth and density of prompt injection (Table~\ref{layer_ablation}).
Results demonstrate that attack effectiveness (ASR) is remarkably robust, achieving near-perfect rates ($\geq$99.94\%) across all configurations, including shallow, middle, deep, or combined layer groups, as well as our baseline setting ([3, 6, 9], 100.00\% ASR). 
This suggests the VPG effectively manipulates feature space regardless of precise depth, provided some intermediate layers are perturbed. 
While ASR remained high, clean accuracy (ACC) showed slight sensitivity; injecting prompts generally improved ACC over the clean baseline (86.16\%), but injecting everywhere (91.92\%) slightly degraded performance compared to sparser configurations like the original (92.16\%) or shallow+middle layers (92.44\%). 
Overall, our baseline configuration ([3, 6, 9]) provides an optimal trade-off, maximizing ASR with high ACC without requiring injection at every layer.
\begin{table}[htbp] 
\centering 
\caption{Ablation study on the impact of VPG injection layers. Results show near-perfect ASR across configurations, while ACC varies slightly. Our baseline provides an optimal balance.}
\label{layer_ablation}
\resizebox{\linewidth}{!}{
\begin{tabular}{@{}lccc@{}} 
\toprule
\multirow{2}{*}{Configuration}                 & \multirow{2}{*}{Layers Injected}         & ACC            & ASR            \\ 
 &   & (\%) & (\%) \\
\midrule
Baseline (Clean ViT)        & []                      & 86.16          & 0.00            \\ 
\textbf{VIPER (Ours)}     & \textbf{[3, 6, 9]}      & 91.44 & 100.00 \\ 
Baseline (All Layers)       & [1-12]            & 91.92          & 99.98           \\ 
\midrule
\textit{Depth Groups:}        &                         &                &                 \\
\quad Shallow                 & [1, 2, 3, 4]            & 92.26          & 99.98           \\
\quad Middle                  & [5, 6, 7, 8]            & 91.06          & 100.00          \\
\quad Deep                    & [9, 10, 11, 12]         & 90.82          & 99.98           \\ 
\midrule
\textit{Combinations:}        &                         &                &                 \\
\quad Shallow + Middle        & [1-8]             & 92.44          & 99.94           \\
\quad Middle + Deep           & [5-12]            & 91.24          & 100.00          \\
\quad Shallow + Deep (Skip) & [1-4, 9-12]           & 92.00          & 99.98           \\ 
\bottomrule
\end{tabular}%
} 
\end{table}

\textbf{ViT backbone. }
VIPER's effectiveness is validated across ViT architectures of varying scales. 
As shown in Table \ref{arch}, from the lightweight DeiT-Tiny to the large ViT-Base, the attack maintains both exceptionally high Attack Success Rates (ASR $>$90$\%$) and strong Clean Accuracy (ACC). 
\begin{table}[h]
\centering
\caption{Comparison of different ViT architectures.}
\label{arch}
\begin{tabular}{lcccc}
\toprule
Dataset                     & Metric &  \begin{tabular}[c]{@{}c@{}}DeiT\\ -Tiny\end{tabular} & \begin{tabular}[c]{@{}c@{}}DeiT\\ -Small\end{tabular} & \begin{tabular}[c]{@{}c@{}}DeiT\\ -Base\end{tabular} \\
\cmidrule(lr){1-5}
\multirow{2}{*}{ImageNet}   & ACC    & 81.36     & 80.96      & 80.76       \\ 
                            & ASR    & 90.72     & 95.80      & 96.48        \\ \cmidrule{1-5}
\multirow{2}{*}{Caltech101} & ACC    & 91.74     & 94.00      & 94.71        \\ 
                            & ASR    & 97.29     & 98.00      & 98.26         \\ \cmidrule{1-5}
\multirow{2}{*}{UCF101}     & ACC    & 62.87     & 69.80      & 63.19         \\ 
                            & ASR    & 99.33     & 99.84      & 99.22     \\ \bottomrule

\end{tabular}
\end{table}

\section{Visualization}
We provide visualization examples in Figure~\ref{fig:vis_summary}. We can see that our trigger is so small that there is no visual difference between the clean and backdoor images. These results further demonstrate that our attack is stealthy.

\begin{figure*}[htbp]
    \centering
    \includegraphics[width=\linewidth]{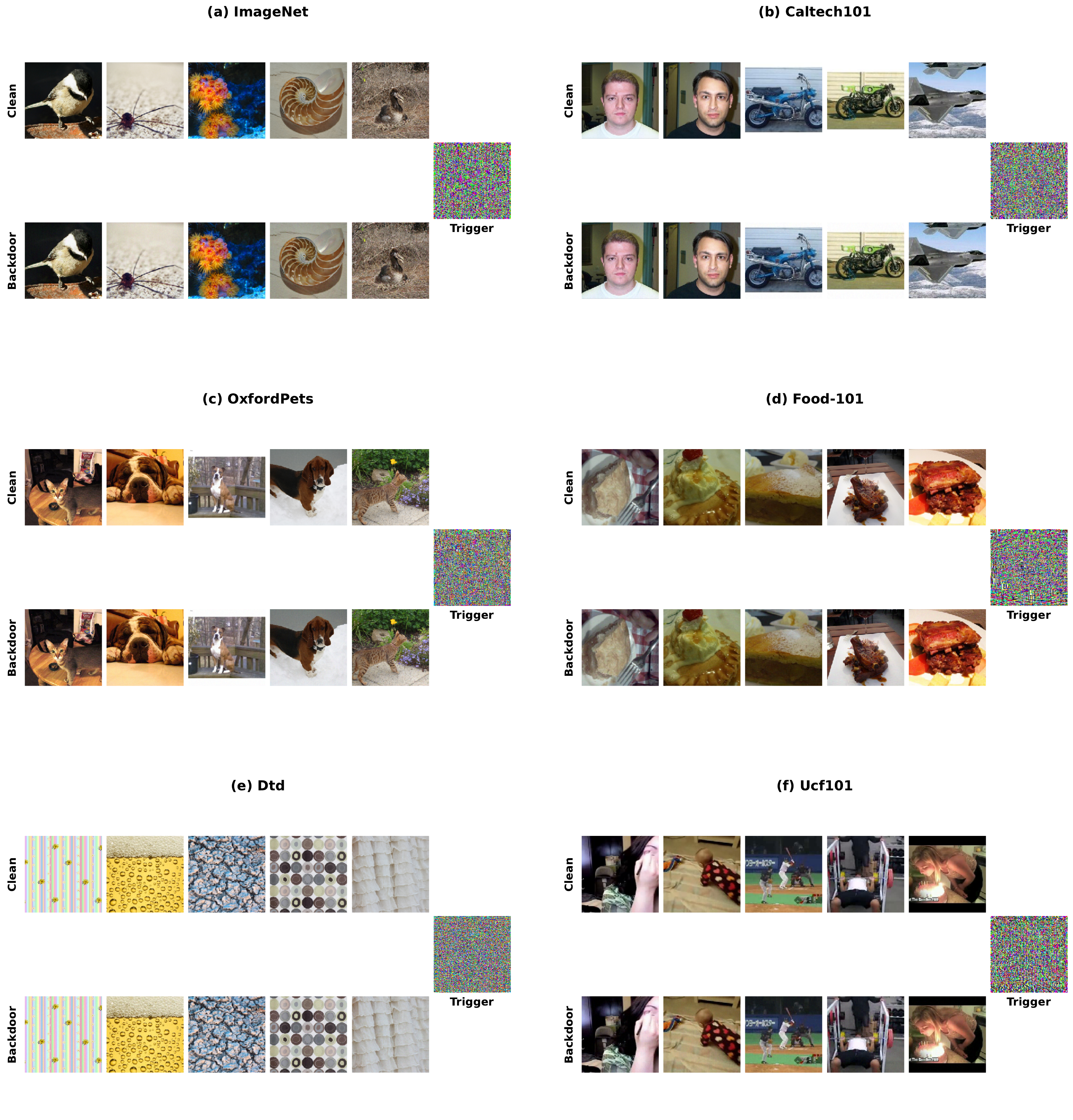}
    \caption{Visualization of clean images (top row of each dataset) and their corresponding backdoor images (bottom row). The trigger (right) is applied to create the backdoor images, but it is visually imperceptible, highlighting the stealthiness of our attack.}
    \label{fig:vis_summary}
\end{figure*}

\section{Theoretical Proof of Functional Fusion under Joint Optimization}

\subsection{Preliminaries and Problem Setup}

Let $f_\theta$ denote the frozen ViT backbone and $g_\phi$ the dynamic Visual Prompt Generator (VPG) with parameters $\phi \in \mathbb{R}^p$.  
For an input $x$, the VPG produces a conditional prompt $\Delta x = g_\phi(h(x))$ based on an intermediate feature $h(x)$.

We consider two tasks optimized jointly:
\begin{align}
    L_{\mathrm{clean}}(\phi) &= \mathbb{E}_{(x,y)\sim\mathcal{D}_{\mathrm{clean}}}
    \, \ell_{\mathrm{CE}}(M_{\theta,\phi}(x), y), \\
    L_{\mathrm{attack}}(\phi,\delta) &= \mathbb{E}_{(x,y)\sim\mathcal{D}_{\mathrm{clean}}}
    \, \ell_{\mathrm{CE}}(M_{\theta,\phi}(T(x,\delta)), y_t),
\end{align}
where $T(x,\delta)$ injects a trigger $\delta$ into $x$ and $y_t$ is the target label.  
The total loss is
\begin{equation}
    L_{\mathrm{total}}(\phi,\delta)
    = L_{\mathrm{clean}}(\phi) + L_{\mathrm{attack}}(\phi,\delta).
    \label{eq:ltotal}
\end{equation}

We assume:
\begin{itemize}
    \item[(A1)] \textbf{Capacity constraint:} VPG has limited parameters (\emph{parameter-efficient fine-tuning}).  
    \item[(A2)] \textbf{Shared representation:} Clean and poisoned inputs share overlapping regions in the backbone feature space $\mathcal{H}$.  
    \item[(A3)] \textbf{Regularization bias:} Optimization implicitly or explicitly favors low-norm (simple) solutions, via weight decay or SGD bias.  
    \item[(A4)] \textbf{Joint objective:} Parameters $\phi$ are optimized by minimizing the combined loss in Eq.~\eqref{eq:ltotal}.
\end{itemize}

Under these conditions, we show that the minimization of $L_{\mathrm{total}}$ inevitably induces a
\emph{Functional Fusion} phenomenon: both benign and malicious functionalities are encoded in the same small subset of parameters, making them inseparable.

\subsection{Linearized Toy Model}

To analyze the phenomenon, we locally linearize $g_\phi$ around its optimum:
\begin{equation}
    g_\phi(h_i) \approx A(h_i) \, \phi,
\end{equation}
where $A(h_i)\in\mathbb{R}^{d_p\times p}$ maps the parameters $\phi$ to a prompt vector.

For $n_c$ clean samples and $n_a$ attack samples, the linearized prediction for each group is:
\begin{align}
    Y_c &\approx B_c + C_c \phi, \\
    Y_a &\approx B_a + C_a \phi,
\end{align}
where $C_c, C_a$ are the respective design matrices, and $B_c, B_a$ denote constant offsets.  
We define the concatenated system:
\begin{equation}
    C = 
    \begin{bmatrix}
        C_c \\ C_a
    \end{bmatrix},
    \qquad
    Y =
    \begin{bmatrix}
        Y_c - B_c \\
        Y_a - B_a
    \end{bmatrix}.
\end{equation}

Using a quadratic (least-squares) approximation to the loss, the total objective with regularization is:
\begin{equation}
    \mathcal{L}(\phi) = \| C\phi - Y \|^2 + \lambda \| \phi \|^2,
    \label{eq:ridge}
\end{equation}
where $\lambda > 0$ captures explicit or implicit regularization.

\subsection{Closed-Form Solution}

\begin{theorem}[Closed-form Solution of Joint Optimization]
The minimizer of Eq.~\eqref{eq:ridge} is
\begin{equation}
    \phi^* = (C^\top C + \lambda I_p)^{-1} C^\top Y.
    \label{eq:ridge-sol}
\end{equation}
\end{theorem}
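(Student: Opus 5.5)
This is standard ridge regression, and the plan is to show that the objective is strictly convex and then read off the minimizer from the first-order condition. First, expand Eq.~\eqref{eq:ridge} as
\[
\mathcal{L}(\phi)=\phi^\top (C^\top C+\lambda I_p)\phi - 2\,\phi^\top C^\top Y + Y^\top Y .
\]
This shows that $\mathcal{L}$ is a quadratic form in $\phi$ with Hessian $2(C^\top C+\lambda I_p)$.

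The key step is to show that $C^\top C+\lambda I_p$ is positive definite, and this is also where the hypothesis $\lambda>0$ is essential. For any $v\neq 0$ we have
\[
v^\top (C^\top C+\lambda I_p)v=\|Cv\|^2+\lambda\|v\|^2\ge \lambda\|v\|^2>0 .
\]
It follows that the matrix is invertible, so the expression in Eq.~\eqref{eq:ridge-sol} is well defined without any rank assumption on the stacked design matrix $C$. It also follows that $\mathcal{L}$ is strictly convex and coercive, so it has a unique global minimizer, and that minimizer is characterized by a vanishing gradient.

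Next, set the gradient to zero:
\[
\nabla\mathcal{L}(\phi)=2(C^\top C+\lambda I_p)\phi-2C^\top Y=0 .
\]
Solving this linear system gives $\phi^*=(C^\top C+\lambda I_p)^{-1}C^\top Y$.

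Uniqueness can also be confirmed by completing the square. Writing $\phi=\phi^*+v$, we get
\[
\mathcal{L}(\phi)=\mathcal{L}(\phi^*)+v^\top(C^\top C+\lambda I_p)v ,
\]
and the second term is strictly positive for every $v\neq0$.

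The computation itself poses no real obstacle. The only point needing care is the well-definedness of the inverse, which rests entirely on $\lambda>0$: in the unregularized limit $C^\top C$ may be singular when $p>n_c+n_a$. For later use, note that $C^\top C=C_c^\top C_c+C_a^\top C_a$ and $C^\top Y=C_c^\top(Y_c-B_c)+C_a^\top(Y_a-B_a)$. This splitting shows that a single shared operator $(C^\top C+\lambda I_p)^{-1}$ acts on both the clean and the attack contributions, which is the structural fact the subsequent fusion argument will presumably exploit.
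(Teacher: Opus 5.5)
Your proof is correct and follows the same route as the paper, which sets $\nabla_\phi\mathcal{L} = 2C^\top(C\phi - Y) + 2\lambda\phi$ to zero and solves $(C^\top C + \lambda I_p)\phi = C^\top Y$. You also check that $C^\top C + \lambda I_p$ is positive definite for $\lambda > 0$, which gives invertibility, strict convexity, and uniqueness by completing the square; this supplies the justification the paper leaves implicit for why that stationary point is the unique global minimizer.
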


\begin{proof}
Setting the gradient to zero:
\[
\nabla_\phi \mathcal{L} = 2C^\top(C\phi - Y) + 2\lambda\phi = 0,
\]
which yields $(C^\top C + \lambda I)\phi = C^\top Y$, leading to Eq.~\eqref{eq:ridge-sol}.
\end{proof}

Let $C = U \Sigma V^\top$ be the singular value decomposition (SVD), with
$\Sigma = \mathrm{diag}(\sigma_1,\dots,\sigma_r)$, $r = \mathrm{rank}(C)$.  
Substituting into Eq.~\eqref{eq:ridge-sol} gives
\begin{equation}
    \phi^* = V (\Sigma^2 + \lambda I_r)^{-1} \Sigma U^\top Y.
\end{equation}

The coefficient on the $j$-th singular direction $v_j$ is:
\begin{equation}
    \alpha_j = v_j^\top \phi^* = \frac{\sigma_j}{\sigma_j^2 + \lambda} (u_j^\top Y).
    \label{eq:alphaj}
\end{equation}

Thus, the squared norm decomposes as
\begin{equation}
    \|\phi^*\|^2 = \sum_{j=1}^{r} 
    \left(\frac{\sigma_j}{\sigma_j^2+\lambda}\right)^2 (u_j^\top Y)^2.
    \label{eq:energy}
\end{equation}

\subsection{Energy Concentration on Shared Directions}

Consider the partition $C = [C_c; C_a]$.
The Gram matrix of the system satisfies:
\begin{equation}
    C^\top C = C_c^\top C_c + C_a^\top C_a.
    \label{eq:gram}
\end{equation}

For any unit vector $v \in \mathbb{R}^p$, we have
\begin{equation}
    v^\top (C^\top C) v = \|C v\|^2 = \|C_c v\|^2 + \|C_a v\|^2.
    \label{eq:rayleigh}
\end{equation}
If there exists a direction $v$ such that both $\|C_c v\|$ and $\|C_a v\|$ are large,
then this direction contributes to a high Rayleigh quotient,
therefore becomes one of the dominant singular directions of $C$.
According to Eq.~\eqref{eq:alphaj}–\eqref{eq:energy}, such directions receive large coefficients $\alpha_j$,
causing $\phi^*$ to concentrate its energy on them.

\begin{lemma}[Energy Concentration]
If $C_c$ and $C_a$ have overlapping column spaces (i.e., share non-trivial common directions in parameter space),
then the joint matrix $C$ exhibits enlarged singular values along those directions, 
and the optimal $\phi^*$ allocates most of its $\ell_2$ energy on them.
\end{lemma}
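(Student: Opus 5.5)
Since the Lemma is informal, the plan is first to fix a precise version and then prove that. Let $P$ be the orthogonal projector onto a $k$-dimensional subspace $S\subset\mathbb{R}^p$ (the ``shared directions''). Assume there are constants $a,b>0$ with $\|C_c v\|^2\ge a$ and $\|C_a v\|^2\ge b$ for every unit $v\in S$; this is the quantitative meaning of overlapping row spaces. Note that the statement says ``column spaces'', but the condition that matters is on the parameter side, i.e.\ the row spaces. The claim to prove then has two parts: (i) $\sigma_j^2(C)\ge a+b$ for $j\le k$; (ii) $\|P\phi^*\|^2\ge(1-\eta)\|\phi^*\|^2$, where $\eta$ is small whenever the target $Y$ is mostly explained through $S$.

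For (i), use Eq.~\eqref{eq:rayleigh} together with Courant--Fischer. Every unit $v\in S$ satisfies $v^\top C^\top C v=\|C_cv\|^2+\|C_av\|^2\ge a+b$. By the max--min characterization, the $k$-th eigenvalue of $C^\top C$ is at least $a+b$. By contrast, the Rayleigh quotient on a direction seen by only one task is bounded by that task's block alone. The same argument shows that $S$ lies approximately in the span of the top $k$ right singular vectors whenever there is a spectral gap. Specifically, suppose $\|Cw\|^2\le c<a+b$ for all unit $w\perp S$. Then Davis--Kahan gives $\|P-V_kV_k^\top\|\le \|P^\perp C^\top C P\|/(a+b-c)$, which controls the misalignment.

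For (ii), use the spectral form from Eq.~\eqref{eq:alphaj}--\eqref{eq:energy}. Split $\|\phi^*\|^2$ into the part carried by the top $k$ directions and the remainder. Write $\psi(\sigma)=\sigma/(\sigma^2+\lambda)$. Its maximum $1/(2\sqrt\lambda)$ is attained at $\sigma=\sqrt\lambda$, and it is decreasing for $\sigma>\sqrt\lambda$. Energy concentration in large-$\sigma$ directions therefore does not follow from the spectrum alone. It needs an assumption on $Y$: the projection $u_j^\top Y$ must be large for $j\le k$ and small otherwise. This is natural here, because both the clean targets $Y_c-B_c$ and the attack targets $Y_a-B_a$ must be fit. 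Fitting them forces $C^\top Y=C_c^\top(Y_c-B_c)+C_a^\top(Y_a-B_a)$ to have both summands aligned with $S$. I would impose $\|(I-P)C^\top Y\|\le\varepsilon\|PC^\top Y\|$. From this, $\phi^*=(C^\top C+\lambda I)^{-1}C^\top Y$ gives $\|(I-P)\phi^*\|/\|P\phi^*\|\lesssim \varepsilon\,(a+b+\lambda)/(\lambda_{\min}^{\perp}+\lambda)$ plus a Davis--Kahan cross term, which bounds $\eta$.

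The main obstacle is step (ii). Large singular values alone actually shrink the coefficients $\alpha_j$ when $\sigma_j\gg\sqrt\lambda$, so the Lemma's literal reasoning (``large $\sigma_j$ $\Rightarrow$ large $\alpha_j$'') is false without a hypothesis on $Y$. My first attempt would state the target-alignment condition explicitly and derive concentration from the resolvent bound above. I would note that in the regime $\sigma_j\ll\sqrt\lambda$ (strong regularization, assumption (A3)), $\psi$ is increasing. In that regime the heuristic in the paper holds directly, with $\alpha_j\approx\sigma_j u_j^\top Y/\lambda$.
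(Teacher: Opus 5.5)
Your diagnosis of the statement and of the paper's sketch is sound. The overlap must be between row spaces in $\mathbb{R}^p$: the paper's condition $\mathcal{S}_c\cap\mathcal{S}_a\neq\emptyset$ is vacuous, and column spaces live in sample space. Part~(i), via Courant--Fischer and the residual $\sin\Theta$ bound, is correct. The paper's inference from Eq.~\eqref{eq:alphaj} that ``larger $\sigma_j$ implies stronger $\alpha_j$'' fails for $\sigma_j>\sqrt\lambda$, so some hypothesis on $Y$ is unavoidable: if $u_j^\top Y=0$ on the shared directions, then $\alpha_j=0$ there however large $\sigma_j$ is. The paper never supplies such a hypothesis, so on this point your plan improves on its sketch.

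The gap is in the hypothesis you chose for Part~(ii). The condition $\|(I-P)C^\top Y\|\le\varepsilon\|PC^\top Y\|$, together with a small Davis--Kahan angle, does not force $\phi^*$ onto $S$, even when $\varepsilon=0$.

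Projecting $(C^\top C+\lambda I)\phi^*=C^\top Y$ onto $S^\perp$ gives $(I-P)\phi^*=M_{22}^{-1}\bigl((I-P)C^\top Y-(I-P)C^\top CP\,\phi^*\bigr)$. Here $M_{22}$ is the compression of $C^\top C+\lambda I$ to $S^\perp$, so $\|M_{22}^{-1}\|=1/(\lambda^\perp_{\min}+\lambda)$. The cross term is therefore $\|P^\perp C^\top CP\|/(\lambda^\perp_{\min}+\lambda)$, amplified by the \emph{small} eigenvalues. It is not $\|P^\perp C^\top CP\|/(a+b-c)$. Likewise, your $\varepsilon$-term needs $\lambda_{\max}(PC^\top CP)$ where you wrote $a+b$, because you need a lower bound on $\|P\phi^*\|$.

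A concrete counterexample: take $p=2$, $S=\mathrm{span}(e_1)$, $C_c=(\sqrt{50},\,1/\sqrt{50})$, $C_a=(\sqrt{50},\,0)$, $Y=(0,\,1/\sqrt{50})^\top$ and $\lambda=0.01$. Then $a=b=50$, $c=0.02$, $C^\top Y=e_1\in S$, and the Davis--Kahan angle is at most $1/99.98$. Yet $C^\top C+\lambda I$ has diagonal $(100.01,\,0.03)$ and off-diagonal entry $1$, so $\phi^*\approx(0.015,\,-0.5)$ carries over $99.9\%$ of its energy on $S^\perp$.

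This is the same mechanism you identified for the spectrum. $C^\top$ damps the $j$-th component of $Y$ by $\sigma_j$, and the resolvent re-amplifies it by $1/(\sigma_j^2+\lambda)$. So alignment of $C^\top Y$ says little about alignment of $\phi^*$.

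The fix is to impose alignment in the singular basis, as you first wrote in words. With $U_k=[u_1,\dots,u_k]$ and $U_{>k}=[u_{k+1},\dots,u_r]$, a sufficient condition is $\max_{j>k}\psi(\sigma_j)\,\|U_{>k}^\top Y\|\le\eta_1\min_{i\le k}\psi(\sigma_i)\,\|U_k^\top Y\|$. By Eq.~\eqref{eq:alphaj}--\eqref{eq:energy} this gives $\|(I-V_kV_k^\top)\phi^*\|\le\eta_1\|\phi^*\|$. Then transfer to $S$ additively via $\|(I-P)\phi^*\|\le\bigl(\eta_1+\|P-V_kV_k^\top\|\bigr)\|\phi^*\|$. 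The Davis--Kahan bound from Part~(i) then enters with no conditioning factor.
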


\begin{proof}[Proof Sketch]
Let $\mathcal{S}_c = \mathrm{span}(C_c)$ and $\mathcal{S}_a = \mathrm{span}(C_a)$.
If $\mathcal{S}_c \cap \mathcal{S}_a \neq \emptyset$, then
for any $v \in \mathcal{S}_c \cap \mathcal{S}_a$, both terms in Eq.~\eqref{eq:rayleigh} are positive,
yielding larger eigenvalues of $C^\top C$ in those directions.
From Eq.~\eqref{eq:alphaj}, larger $\sigma_j$ implies stronger coefficients $\alpha_j$.
Hence, $\phi^*$ accumulates energy along shared directions.
\end{proof}

This means that \textbf{the regularized least-squares solution
compresses both tasks' representations into the same low-dimensional subspace}.
This subspace corresponds to a small subset of parameter coordinates with large magnitude—empirically observed as the ``core'' or
``functional fusion'' region.

\subsection{Impact of Pruning and Inseparability}

Partition $\phi = [\phi_S; \phi_{\bar S}]$ and $C = [C_S\,\,C_{\bar S}]$.
If we prune parameters by enforcing $\phi_S = 0$ and re-optimize the remaining $\psi = \phi_{\bar S}$,
the optimal $\psi^*$ satisfies:
\begin{equation}
    (C_{\bar S}^\top C_{\bar S} + \lambda I)\psi^* = C_{\bar S}^\top Y.
\end{equation}

The increase in loss relative to the full optimum $\phi^*$ is:
\begin{equation}
    \Delta\mathcal{L} =
    ([0;\psi^*] - \phi^*)^\top (C^\top C + \lambda I)
    ([0;\psi^*] - \phi^*),
    \label{eq:deltaL}
\end{equation}
which is strictly positive and grows with $\|\phi_S^*\|$.
Thus, removing high-energy coordinates (those forming the ``core'' $S$)
simultaneously harms both $L_{\mathrm{clean}}$ and $L_{\mathrm{attack}}$,
as they are jointly embedded in $C$ and $Y$.

\begin{corollary}[Functional Inseparability]
Let $S$ denote the subset of coordinates where $\phi^*$ has dominant energy.
If $\mathcal{S}_c \cap \mathcal{S}_a \neq \emptyset$,
then pruning or perturbing $S$ increases both $L_{\mathrm{clean}}$ and $L_{\mathrm{attack}}$,
leading to simultaneous degradation of benign accuracy and attack success rate.
\end{corollary}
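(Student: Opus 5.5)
We would prove the statement inside the linearized toy model of Eq.~\eqref{eq:ridge}, splitting the objective by task. Write $r_c(\phi)=C_c\phi-(Y_c-B_c)$ and $r_a(\phi)=C_a\phi-(Y_a-B_a)$, so that $L_{\mathrm{clean}}\approx\|r_c\|^2$ and $L_{\mathrm{attack}}\approx\|r_a\|^2$. Pruning $S$ replaces $\phi^*$ by $\tilde\phi=[0;\phi^*_{\bar S}]$, i.e.\ $\tilde\phi=\phi^*-P_S\phi^*$ with $P_S$ the coordinate projector. Put $d=P_S\phi^*$. Then each task loss changes exactly by
\[
\Delta L_{\mathrm{clean}}=\|C_c d\|^2-2\,d^\top C_c^\top r_c(\phi^*),\qquad
\Delta L_{\mathrm{attack}}=\|C_a d\|^2-2\,d^\top C_a^\top r_a(\phi^*).
\]
By Cauchy--Schwarz, $\Delta L_{\mathrm{clean}}\ge\|C_cd\|\bigl(\|C_cd\|-2\|r_c(\phi^*)\|\bigr)$, and analogously for the attack loss. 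So the plan is to show two things. First, the pruned direction $d$ has large image under \emph{both} $C_c$ and $C_a$. Second, the optimal residuals are small. Together these make both increments strictly positive.

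\textbf{Residual step.} From the SVD form preceding Eq.~\eqref{eq:alphaj}, the component of the joint residual $C\phi^*-Y$ along $u_j$ equals $-\frac{\lambda}{\sigma_j^2+\lambda}\,u_j^\top Y$. The component of $Y$ orthogonal to $\mathrm{range}(C)$ adds an irreducible term. Hence $\|r_c\|^2+\|r_a\|^2=\|C\phi^*-Y\|^2$ is controlled by $\lambda$ on the dominant directions plus a model-misfit term $\|(I-UU^\top)Y\|^2$. We would assume the latter is small, which is the regime where the backdoored model fits both tasks. This bounds each of $\|r_c\|$ and $\|r_a\|$ individually.

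\textbf{Overlap step.} Here we invoke the Energy Concentration Lemma. Since $\mathcal{S}_c\cap\mathcal{S}_a\neq\emptyset$, $\phi^*$ places most of its energy on the top right-singular directions $v_j$. By Eq.~\eqref{eq:rayleigh}, these are directions where both $\|C_cv_j\|$ and $\|C_av_j\|$ are large. Making ``$S$ carries dominant energy'' quantitative, we take $\|P_S\phi^*\|\ge(1-\eta)\|\phi^*\|$. We also need $d$ to stay close to the span of those shared top directions, which gives lower bounds $\|C_cd\|\ge\kappa_c\|d\|$ and $\|C_ad\|\ge\kappa_a\|d\|$, with $\kappa_c,\kappa_a$ the smallest restricted singular values of $C_c,C_a$ on that shared subspace, minus an $O(\eta)$ leakage term. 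Combining with the residual step, both increments are positive once $\kappa\|d\|>2\|r\|$ for each task. The general ``perturbing'' case follows the same way, taking $d$ to be any perturbation supported on $S$ of comparable size.

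\textbf{Main obstacle and conclusion.} The main obstacle is the overlap step. A coordinate subset $S$ need not be aligned with singular directions, so $P_S\phi^*$ may leak onto directions that only one task sees. Also, the Lemma as stated only controls where the energy of $\phi^*$ lies, not how $C_c$ and $C_a$ act on its coordinate restriction. We would first try to make this an explicit hypothesis, a restricted-eigenvalue condition of $C_c$ and $C_a$ on vectors supported on $S$. If that is too strong, we would fall back on proving the inequality only for $S$ chosen as the support of the projection of $\phi^*$ onto the shared top subspace. Finally, we would translate loss increase into degradation of ACC and ASR by noting that, in the linearized model, the logit margins of clean and triggered samples move in proportion to $C_cd$ and $C_ad$. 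This last link is a heuristic proxy rather than a theorem, and we would state it as such.
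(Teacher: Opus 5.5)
Your route is genuinely different from the paper's. The paper never splits the loss by task. It zeroes $\phi_S$, refits $\psi=\phi_{\bar S}$ from $(C_{\bar S}^\top C_{\bar S}+\lambda I)\psi^*=C_{\bar S}^\top Y$, and invokes the exact quadratic identity Eq.~\eqref{eq:deltaL}, which holds for any $\tilde\phi$ in place of $[0;\psi^*]$. It then asserts that $L_{\mathrm{clean}}$ and $L_{\mathrm{attack}}$ both rise because they are ``jointly embedded in $C$ and $Y$.'' That route gives strict positivity for free and covers pruning with or without refitting. But it controls only $L_{\mathrm{clean}}+L_{\mathrm{attack}}+\lambda\|\phi\|^2$, so the split into ``both'' is never derived. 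Your exact per-task expansion with Cauchy--Schwarz is what ``both'' actually needs. The hypotheses it forces on you (small residuals, lower bounds on $\|C_cd\|$ and $\|C_ad\|$) cannot be dropped. Take $C_c=C_a=[1\ \ 0]$, $Y_c-B_c=0$, $Y_a-B_a=1$: the blocks share $e_1$, and $\phi^*=((2+\lambda)^{-1},0)$ puts all its energy on $S=\{1\}$. Yet pruning $S$ (with or without refitting, as $C_{\bar S}=0$) lowers $L_{\mathrm{clean}}$ from $(2+\lambda)^{-2}$ to $0$. There $\|C_cd\|=\|r_c(\phi^*)\|$, so your criterion $\|C_cd\|>2\|r_c(\phi^*)\|$ correctly fails. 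As in the paper, the final passage to ACC/ASR is only heuristic.

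The genuine gap is in your overlap step, and it comes before the coordinate-misalignment issue you flagged. Eq.~\eqref{eq:rayleigh} gives only $\|C_cv_j\|^2+\|C_av_j\|^2=\sigma_j^2$. A top singular direction of $C$ can lie in $\ker C_a$ even when the blocks share directions: with $C_c=\mathrm{diag}(10,1)$ and $C_a=[0\ \ 1]$, the top direction is $e_1$ and $C_ae_1=0$. So ``both large'' does not follow. Nor does Eq.~\eqref{eq:alphaj} place the energy of $\phi^*$ on the top directions. The map $\sigma\mapsto\sigma/(\sigma^2+\lambda)$ is decreasing for $\sigma>\sqrt{\lambda}$, so strong directions receive \emph{smaller} coefficients per unit of $u_j^\top Y$. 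The Energy Concentration Lemma's sketch rests on exactly these two inferences, so you cannot lean on it here.

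Your restricted-eigenvalue condition on $S$-supported vectors must therefore be the primary hypothesis, not a fallback. Together with $\kappa_c\|P_S\phi^*\|>2\|r_c(\phi^*)\|$ and its attack analogue, it closes the argument. The hypothesis $\mathcal{S}_c\cap\mathcal{S}_a\neq\emptyset$ then does no work.

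Two smaller points. First, in the residual step the factor $\lambda/(\sigma_j^2+\lambda)$ is close to $1$ on weak directions. So state the bound on $\|r_c\|,\|r_a\|$ as an assumption rather than deriving it from the SVD. Second, your $d$ is supported on $S$, i.e.\ you prune without refitting, whereas the paper refits $\phi_{\bar S}$. With refitting, the surviving coordinates may re-fit one task alone. Take $C_c=[1\ \ a]$, $C_a=[1\ \ 0]$, $Y_c-B_c=0$, $Y_a-B_a=1$, $S=\{1\}$, with $a$ large and $\lambda$ small. All your hypotheses hold, and zeroing $\phi_1$ raises both losses. But refitting gives $\phi_2=0$ and $L_{\mathrm{clean}}=0$, strictly below its value at $\phi^*$. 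Covering the refit case needs a further hypothesis that the $\bar S$ columns cannot re-fit either target.
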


\subsection{Conclusion (Functional Fusion Theorem)}

\begin{theorem}[Functional Fusion under Joint Optimization]
Under assumptions (A1)–(A4), the joint minimization of
$L_{\mathrm{total}} = L_{\mathrm{clean}} + L_{\mathrm{attack}}$
with regularization inevitably yields a low-norm solution $\phi^*$ 
whose energy is concentrated on the overlapping subspace of
$\mathrm{span}(C_c)$ and $\mathrm{span}(C_a)$.
Consequently, the clean and attack functionalities are encoded in the same small parameter subset,
and any removal of this subset jointly destroys both functionalities.
\end{theorem}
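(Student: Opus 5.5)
The argument stitches together the results of this appendix inside the linearized toy model of Eq.~\eqref{eq:ridge}; the statement is only meaningful there, so the first step is to make the reduction explicit. Assumption (A4) says we minimize $L_{\mathrm{total}}$. Linearizing $g_\phi$ around its optimum and taking a quadratic surrogate for the cross-entropy turns this into $\|C\phi-Y\|^2$. Assumption (A3) supplies the term $\lambda\|\phi\|^2$, and (A1) fixes the parameter dimension $p$, so the rows of $C=[C_c;C_a]$ cannot be fitted by disjoint parameter blocks without a norm penalty. The closed-form Theorem then gives $\phi^*$ uniquely, and its spectral form Eq.~\eqref{eq:alphaj} gives the coefficient $\alpha_j=\frac{\sigma_j}{\sigma_j^2+\lambda}(u_j^\top Y)$ on each right singular direction $v_j$.

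Second, I would make the notion of ``overlapping subspace'' precise. Since $\phi$ lives in $\mathbb{R}^p$, the relevant object is the row spaces of $C_c$ and $C_a$, not their column spaces. The lemma's hypothesis should therefore be read as $\mathcal{R}_c\cap\mathcal{R}_a\neq\{0\}$, where $\mathcal{R}_c=\mathrm{row}(C_c)$ and $\mathcal{R}_a=\mathrm{row}(C_a)$. Assumption (A2) is what I would use to justify this non-trivial intersection: clean and poisoned features overlap in $\mathcal{H}$, so the Jacobians $A(h)$ of the two groups share directions. By Eq.~\eqref{eq:rayleigh}, for $v$ in this intersection we have $\|Cv\|^2=\|C_cv\|^2+\|C_av\|^2$, which exceeds what either block gives alone. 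Courant--Fischer then lets me compare the top eigenvalues of $C^\top C$ with those of $C_c^\top C_c$ and $C_a^\top C_a$, so the shared directions sit among the leading $v_j$. Invoking the Energy Concentration Lemma, the energy decomposition Eq.~\eqref{eq:energy} puts a dominant fraction of $\|\phi^*\|^2$ on the shared subspace. This is the ``concentrated on the overlapping subspace'' clause.

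Third, to go from this subspace to a ``small parameter subset'' $S$ of coordinates, I would define $S$ directly as the top-energy coordinates of $\phi^*$, as the Functional Inseparability Corollary does. Joint destruction under removal then follows from Eq.~\eqref{eq:deltaL}: $\Delta\mathcal{L}\ge\lambda_{\min}(C^\top C+\lambda I)\,\|[0;\psi^*]-\phi^*\|^2\ge\lambda\|\phi_S^*\|^2>0$. To show that \emph{both} losses rise, not just their sum, I would split the residual increase into its $C_c$ and $C_a$ blocks. The gap $[0;\psi^*]-\phi^*$ has a large component along shared directions $v$, and on those directions both $\|C_cv\|$ and $\|C_av\|$ are bounded below. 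Applying the Corollary then yields the simultaneous increase in $L_{\mathrm{clean}}$ and $L_{\mathrm{attack}}$.

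The main obstacle is the word ``inevitably.'' Enlarged singular values do not by themselves force large $\alpha_j$: the factor $\sigma/(\sigma^2+\lambda)$ is non-monotone and peaks at $\sigma=\sqrt\lambda$, and $u_j^\top Y$ may be small. A subspace also need not be coordinate-sparse. My first attempt would add explicit hypotheses to the statement: the shared singular values exceed $\sqrt\lambda$ by some margin, $Y$ has non-negligible projection onto the corresponding $u_j$, and the shared directions are approximately aligned with a coordinate block. I would then prove a quantitative version in the form of an energy-fraction lower bound, and present the theorem as holding under these conditions within the linearized model.
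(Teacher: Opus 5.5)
\paragraph{Assessment.} Your route is the paper's own. Its proof is a one-line appeal to \eqref{eq:rayleigh}--\eqref{eq:energy} and the Energy Concentration Lemma, with the pruning clause resting on \eqref{eq:deltaL} and the Functional Inseparability Corollary. Your repairs are correct and go beyond the paper:
\begin{itemize}
\item you use row spaces $\mathcal{R}_c,\mathcal{R}_a\subseteq\mathbb{R}^p$ instead of column spaces;
\item you write $\neq\{0\}$ instead of $\neq\emptyset$;
\item you state the bound $\Delta\mathcal{L}\ge\lambda\|\phi_S^*\|^2$ on the regularized objective.
\end{itemize}
You are also right that ``inevitably'' needs extra hypotheses, which the paper never supplies. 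However, two steps you treat as valid are not, and the second is a genuine gap.

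\paragraph{First gap: the energy-concentration step.} Courant--Fischer gives only $\lambda_k(C^\top C)\ge\max\{\lambda_k(C_c^\top C_c),\lambda_k(C_a^\top C_a)\}$. It does not put the overlap among the leading $v_j$. Take $C_c=\mathrm{diag}(1,\varepsilon)$ and $C_a=[0\;\;\varepsilon]$. The overlap is $\mathrm{span}(e_2)$, but the top singular direction is $e_1$. For any fixed $Y$ with $(Y_c)_1\neq0$, the share of $\|\phi^*\|^2$ on $e_2$ tends to $0$ as $\varepsilon\to0$. This example also refutes the Energy Concentration Lemma as stated.

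Moreover, $\mathcal{R}_c\cap\mathcal{R}_a$ need not be spanned by singular vectors of $C$. So ``the shared singular values'' in your added hypotheses needs its own assumption. Hypotheses on the shared directions alone also cannot yield an energy \emph{fraction}; you must control the non-shared $\alpha_j$ as well. Since $\alpha_j\approx u_j^\top Y/\sigma_j$ for $\sigma_j\gg\sqrt\lambda$, the quantity to compare across directions is $|u_j^\top Y|/\sigma_j$. In particular, a larger margin above $\sqrt\lambda$ shrinks the shared coefficients rather than enlarging them.

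\paragraph{Second gap: the ``both losses rise'' step.} Set $d=[0;\psi^*]-\phi^*$ and residuals $r_c=C_c\phi^*-Y_c$, $r_a=C_a\phi^*-Y_a$. The per-task changes are
\[
\|C_cd\|^2+2d^\top C_c^\top r_c \quad\text{and}\quad \|C_ad\|^2+2d^\top C_a^\top r_a .
\]
Stationarity only gives $C_c^\top r_c+C_a^\top r_a=-\lambda\phi^*$. The linear terms are therefore neither individually signed nor zero in sum. Splitting the quadratic form in \eqref{eq:deltaL} into blocks says nothing about each loss separately. Even the quadratic parts need $C_cd\neq0$ and $C_ad\neq0$. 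A large shared component of $d$ does not ensure this, because the other components of $d$ can cancel it.

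The claim is in fact false even with full overlap. Take $p=1$, $C_c=C_a=1$, $Y_c=1$, and $Y_a=0$ (or any $0\le Y_a<1/(3+2\lambda)$). All energy sits on the single shared coordinate. Yet pruning it raises $L_{\mathrm{clean}}$ and strictly lowers $L_{\mathrm{attack}}$: the shared direction trades the tasks off rather than serving both. Citing the Corollary therefore does not close the step. The paper gives it no proof, and this example refutes it.

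\paragraph{What is missing.} You need a cooperation condition on the targets, for example $d^\top C_c^\top r_c\ge0$ and $d^\top C_a^\top r_a\ge0$ along the pruned directions. In one dimension this means $\phi^*$ lies between $0$ and each single-task optimum. None of your proposed hypotheses supplies it: they concern singular values, the size of $u_j^\top Y$ for the concatenated $Y$, and coordinate alignment.
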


\begin{proof}
Immediate from Eq.~\eqref{eq:rayleigh}–\eqref{eq:energy} and the preceding lemma:
joint optimization amplifies shared directions (via gradient accumulation),
while the regularization $\lambda\|\phi\|^2$ penalizes redundant or disjoint bases.
Hence, the optimizer prefers to reuse shared parameter directions to minimize both losses simultaneously.
This coupling compresses both task mappings into a single low-dimensional core,
yielding the observed functional fusion phenomenon. \qedhere
\end{proof}

\subsection{Remarks}

\begin{itemize}
    \item \textbf{Why dynamic prompts amplify fusion:}
    The conditional mapping $g_\phi(h)$ uses the same parameters $\phi$ to handle diverse inputs.
    Under limited capacity, optimization reuses the same parameter directions
    to realize both benign and malicious behaviors, enhancing fusion.
    \item \textbf{Implications for pruning defenses:}
    Eq.~\eqref{eq:deltaL} shows that removing shared-core parameters increases both losses quadratically,
    explaining the empirical collapse of clean accuracy and ASR simultaneously.
    \item \textbf{Extensions:}
    A full-rank matrix analysis can formally prove that 
    the minimal Frobenius-norm solution prefers shared bases between tasks,
    connecting this phenomenon to classical multi-task learning theory.
\end{itemize}

\subsection{Summary}

The above analysis demonstrates, from an optimization and linear algebra perspective,
that under capacity constraints and regularization,
the joint minimization of $L_{\mathrm{clean}} + L_{\mathrm{attack}}$
inevitably drives the model toward a \emph{functional fusion} regime,
where both tasks share the same compact parameter core.
This explains the empirical inseparability of clean and malicious behaviors
and the failure of pruning-based defenses observed in VIPER.